\newtheorem{theorem}{Theorem}
\newtheorem{lemma}{Lemma}
\newtheorem{definition}{Definition}
\newtheorem{proposition}{Proposition}
\newtheorem{remark}{Remark}
\newcommand{\Tr}{{{\rm Tr}}}
\title{Dissipative Dynamics of Quantum Fluctuations}
\author{F. Benatti$^{1,2}$, F. Carollo$^{1,2}$, R. Floreanini$^2$\\
\small ${}^1$Dipartimento di Fisica, Universit\`a di Trieste, Trieste, 34151 Italy\\
\small ${}^2$Istituto Nazionale di Fisica Nucleare, Sezione di Trieste, 34151 Trieste, Italy}
\date{\null}
\begin{document}

\maketitle




\begin{abstract}
One way to look for complex behaviours in many-body quantum systems is to let the number $N$ of degrees of freedom become large and focus upon collective observables.
Mean-field quantities scaling as $1/N$ tend to commute, whence complexity at the quantum level can only be inherited from complexity at the classical level. 
Instead, fluctuations of microscopic observables scale as $1/\sqrt{N}$ and exhibit  collective Bosonic features, typical of a mesoscopic regime half-way between the quantum one at the microscopic level and the classical one at the level of macroscopic averages.
Here, we consider the mesoscopic behaviour emerging from an infinite quantum spin chain undergoing a microscopic dissipative, irreversible dynamics and from global states without long-range correlations and invariant under lattice translations and dynamics.
We show that, from the fluctuations of one site spin observables whose linear span is mapped into itself by the dynamics, there emerge bosonic operators obeying
a mesoscopic dissipative dynamics mapping Gaussian states into Gaussian states. Instead of just depleting quantum correlations because of decoherence effects, these maps can generate entanglement at the collective, mesoscopic level, a phenomenon with no classical analogue that embodies a peculiar complex behaviour at the interface between micro and macro regimes.
\end{abstract}

\section{Introduction}
In many-body quantum systems, when the number $N$ of
constituents becomes very large, the accessible observables 
are collective ones.
For these ``macroscopic'' observables, one usually expects that quantum effects 
fade away, even more so when
the many-body system is in contact with an external environment.
This is surely the case for ``mean field'' observables, that are  averages 
of single particle microscopic operators: they scale as $1/N$
and behave as classical commuting observables when $N$ becomes large.
The only complex behaviour they can show is inherited from a possibly complex classical limiting behaviour: this is the scenario of the theory of quantum chaos where one studies the footprints left by the quantisation of chaotic classical systems either in the appearance of typical logarithmic time-scales or energy spectrum distributions\cite{chaos}.

Instead, collective observables scaling as
$1/\sqrt{N}$ \cite{Goderis,Verbeure,Matsui} retain 
quantum properties as $N$ increases: they have been called
``fluctuation operators'' since they account for global deviations from mean values as classical fluctuations do. Indeed, for them a quantum central limit theorem has been 
proved with respect to states without spatial long-range correlations.
These fluctuation operators form an algebra that, independently from the nature of the microscopic many-body system, turns out to be 
non-commutative and always of bosonic type, thus showing
a quantum behaviour. Being half-way between microscopic
observables (as for instance the individual spin operators 
in generic spin systems) and truly macroscopic ones 
({\it e.g.} the corresponding mean magnetization),
the fluctuation operators have been named ``mesoscopic'':
they are the place where to look for truly quantum signals
in the dynamics of ``large'' systems.

Unlike in the mean-field scenario where complexity can only be inherited from the classical level of description, at the mesoscopic quantum level it can manifest itself as an emergent phenomenon through the presence of those typical quantum features as entanglement that have no classical counterpart. While it is common in quantum systems with few degrees of freedom,
such a presence is in general quite unexpected in large quantum systems where the emergent mesoscopic dynamics is likely to be marred by strong decoherence effects which spoil quantumness, especially if the 
microscopic dynamics is itself dissipative as in the following.

The emergent dynamics of fluctuations has been studied in the case of reversible, that is unitary, microscopic dynamics \cite{Verbeure,Matsui}; instead, very little is known for open many-body systems, {\it i.e.} for systems immersed in an external bath.
This is the most common situation encountered in actual experiments, typically
involving cold atoms, optomechanical or spin-like systems \cite{Aspelmeyer,Rogers}, that can never be thought of as completely isolated from their thermal surroundings.
Actually, the repeated claim of having detected ``macroscopic'' entanglement
in those experiments \cite{Jost,Krauter} poses a serious challenge in trying to 
interpret theoretically those results \cite{Narnhofer}.

Motivated by these experimental findings, it has been shown that, in concrete models, quantum behaviour can indeed be present at the mesoscopic level in open many-body
systems provided suitable fluctuation operators are considered. Even more strikingly, mesoscopic entanglement can be induced by the presence of an external environment \cite{BCF}. These evidences could be obtained for the case of two quantum spin chains undergoing a Lindblad type dissipative dynamics that statistically couples their spins and preserves the global microscopic state.
In such model, a Weyl algebra of quantum fluctuations of single site spin operators could be  constructed and a mesoscopic semigroup of Gaussian dynamical maps on the states over the Weyl algebra could be derived from the given dissipative microscopic time-evolution.
However, some assumptions have been made, namely
\begin{itemize}
\item
the microscopic state on the quantum spin chains was considered to be a \textit{KMS} thermal factor state, namely a tensor product of a same Gibbs density matrix at each site;
\item 
the generator of the microscopic dissipative dynamics had a specific form.
\end{itemize}

In the following, we will instead consider a quantum spin chain (of which the chain considered in \cite{BCF} is a special case) endowed with a generic time-invariant clustering state, without any request of being either a tensor product or a thermal state, and with a microscopic dissipative dynamics that only obeys certain consistency constraints with respect to the chosen quantum fluctuations. In this setting, we show that, generically, the microscopic dissipative dynamics of an open quantum spin chain gives rise to a mesoscopic semigroup of completely positive, unital maps that transform Weyl operators into Weyl operators so that the dual maps acting on the states over the Weyl algebra maps Gaussian states into Gaussian states.
These results are relevant for actual applications, in particular in the description of the behaviour of ultra-cold gases trapped in optical lattices that usually involve large numbers of atoms distributed over on-site confining potentials and show collective, coherent quantum behaviours.

\section{Algebra of Quantum Fluctuations}

In this section we shall briefly review the construction of the algebra of fluctuations for a quantum spin chain, namely for a one-dimensional lattice with a finite ($p$-level) spin, described by the matrix algebra $\mathcal{A}^{(k)}=M_{p}(\mathbb{C})$ at each site $k$, where $M_p(\mathbb{C})$ denotes the algebra of $p\times p$ complex matrices. 

Technically speaking, the infinite spin chain is described by the ($C^*$) algebra $\mathcal{A}$, called "quasi-local", that arises from the norm-closure of the union of all local algebras supported by finite subsets of the lattice \cite{Bratteli}:
\begin{equation}
\label{qla0}
\mathcal{A}=\overline{\bigcup_{N\geq 0}\mathcal{A}_{[-N,N]}}^{\|\cdot\|}\ .
\end{equation}
The physical meaning of this construction is straightforward: all operators $A\in\mathcal{A}$ can be approximated in norm as well as one wishes by means of local operators $A_{[-N,N]}$ belonging to finitely supported matrix algebras $\mathcal{A}_{[-N,N]}=\bigotimes_{k=-N}^N \mathcal{A}^{(k)}$.
 
Given a single spin operator $a\in M_{p}(\mathbb{C})$, it is embedded into $\mathcal{A}$ as an element of the $k$-site spin algebra as follows:
\begin{equation}
\label{qla1}
M_p(\mathbb{C})\ni a\mapsto a^{(k)}=1_{-\infty,k-1]}\otimes a\otimes 1_{[k+1,\infty}\ ,
\end{equation}
where $1_{-\infty,k-1]}$, respectively $1_{[k+1,\infty}$ denote tensor products of infinitely many identity matrices up to site $k-1$, respectively from site $k+1$.
Then, one can endow the quasi-local algebra $\mathcal{A}$ with the translation 
automorphism $\tau:\mathcal{A}\to\mathcal{A}$, whose action is to shift the operator from site $k$ to site $k+1$: $\tau\left(a^{(k)}\right)=a^{(k+1)}$.

States on $\mathcal{A}$ are all linear functionals 
$\omega:\mathcal{A}\mapsto\mathbb{C}$ that are positive, $\omega(A^\dag A)\geq 0$ for all $A\in\mathcal{A}$, and normalized, $\omega(1)=1$: they indeed correspond to generic expectations on $\mathcal{A}$ that assign to operators $A\in\mathcal{A}$ their mean values $\omega(A)$.
A translation invariant state $\omega$, is a state such that 
\begin{equation}
\label{timeinv0}
\omega(\tau(A))=\omega(A)\qquad\forall\, A\in\mathcal{A}\ .
\end{equation}
A particular class of translation invariant states are the clustering states, namely states with vanishing spatial correlations among sufficiently far apart operators:
\begin{equation}
\label{qla3}
\lim_{|k|\to\infty}\omega(\tau^k(A)B)=\omega\left(A\right)\omega(B)\qquad \forall A,B\in\mathcal{A}\ .
\end{equation}

\subsection{Quantum Fluctuations}

In order to construct the algebra of quantum fluctuations for the quantum spin chain $\mathcal{A}$, one selects a translation invariant, clustering state $\omega$ and a specific set of hermitian spin operators.
In the following, as in \cite{BCF}, we shall focus upon a set $\chi$ consisting of 
single-site operators $\chi=\left\{x_1,x_2,\dots,x_d\right\}\subseteq M_{p}(\mathbb{C})$. 

\begin{remark}
\label{rem-1}
The set $\chi$ of observables of which one considers the fluctuations need not be single-particle operators: they can be supported by a larger number of lattice sites.
They need not either be a generating set for the local algebra they belong to: for instance, in the case of spins $1/2$, $\chi$ may or may not consist of all the Pauli matrices plus the identity matrix. Both these choices are ultimately dictated by the mesoscopic physics one is interested in. 
\end{remark}

Each $x_i\in\chi$ is characterised by a local fluctuation operator
\begin{equation}
\label{fqf0}
F_N(x_i)=\frac{1}{\sqrt{N_T}}\sum_{k=-N}^{N}\left(x_i^{(k)}-\omega(x_i)\right)\ ,\qquad
N_T=2N+1\ .
\end{equation}
Notice that 
\begin{equation}
\label{lqf0}
\omega\Big(F_N(x_i)\Big)=0\ .
\end{equation}
The local fluctuation operators provide a quantum version of the classical fluctuations of a family of identically distributed stochastic variables, where now the latter are replaced by non-commuting spins and their stationary joint probability distributions by the expectations with respect to the invariant state $\omega$.
For spin chain quantum fluctuations, the reformulation of the classical central limit theorem \cite{Verbeure} can be briefly summarized as follows.

\begin{definition}
The system $\left(\omega,\chi\right)$ with $\omega$ translation invariant and clustering state, is said to have normal quantum fluctuations if, $\forall x_i,x_j\in \chi$,
\begin{eqnarray*}
&&
\sum_{k\in\mathbb{Z}}\left|\omega\left(x_i\tau^k\left(x_j\right)\right)-\omega(x_i)\omega(x_j)\right|<\infty\\
&&
\lim_{N\to+\infty}\omega\left(F^2_N(x_i)\right)=:\Sigma_\omega^{ii}\ ,\qquad
\lim_{N\to+\infty}\omega\left({\rm e}^{i\alpha F_N(x_i)}\right)=\exp\left(-\frac{\alpha^2}{2}\Sigma_\omega^{ii}\right)\qquad \forall \alpha\in\mathbb{R}\ .
\end{eqnarray*}
\label{NQF}
\end{definition}

Consider the commutator of two fluctuation operators: because operators at different sites commute one gets
$$
\Big[F_N(x_i),F_N(x_j)\Big]=
\frac{1}{N_T}\sum_{k=-N}^Nz_{ij}^{(k)}=:<z_{ij}>_N\ ,\quad
z^{(k)}_{ij}=\left[x^{(k)}_i\,,\,x^{(k)}_j\right]\ ,
$$
namely, the average $<z_{ij}>_N$ of a same single-site spin operator, 
$z_{ij}=[x_i,x_j]$, over the $N_T=2N+1$ sites between $-N$ and $N$, a typical mean-field observable. Furthermore, since $\omega$ is translation invariant, 
$\omega\left(<z_{ij}>_N\right)=\omega\left(z_{ij}\right)$, and clustering, one deduces that
$$
\lim_{N\to+\infty}\omega(A^\dag\,<z_{ij}>_N\,B)=\omega(A^\dag\,B)\,\omega(z_{ij})
$$
for all $A,B\in\mathcal{A}$.
Namely, commutators of local fluctuations tend in a proper (weak) sense to multiples of the identity, a behaviour which is  completely different from that of the commutators of 
mean-field observables. Indeed, $\displaystyle <x_i>_N=\frac{1}{N_T}\sum_{k=-N}^Nx_i^{(k)}$ and
$$
\left[<x_i>_N\,,\,<x_j>_N\right]=\frac{1}{N_T^2}\sum_{k=-N}^Nz_{ij}^{(k)}
$$  
vanishes as $1/N_T$ for $\|z^{(k)}_{ij}\|=\|z_{ij}\|$ is finite.

While mean-field observables behave as commuting classical, macroscopic observables in the large $N$ limit, in the same limit local fluctuations behave as bosonic degrees of freedom. In fact, commutators of local fluctuations not only do not vanish, but also  provide a symplectic form, $\sigma_\omega$, on the linear span of the elements of the chosen set $\chi$. Indeed, the quantities
\begin{equation}
\label{qfa0a}
\sigma_\omega^{k\ell}=-i\lim_{N\to+\infty}\omega\left(\left[F_N(x_k),F_N(x_\ell)\right]\right)
\end{equation}
are real and satisfy $\sigma_\omega^{k\ell}=-\sigma_\omega^{\ell k}$. 

By means of the symplectic form, one can introduce Weyl-like operators indexed by the elements $x_i\in\chi$ satisfying $W(x_k)^\dagger=W(-x_k)$ and the Canonical Commutation Relations
\begin{equation}
\label{qfa0b}
W(x_k)W(x_\ell)=W(x_k+x_\ell)\, \exp\Big(-\frac{i}{2}\sigma_\omega^{k\ell}\Big)\ .
\end{equation}
The Weyl algebra $\mathcal{W}(\chi,\sigma_\omega)$ is the algebra generated by the linear span of generic Weyl operators, indexed by real vectors $r\in\mathbb{R}^d$,
\begin{equation}
\label{qfa4}
W(r):=W\Big(\sum_{j=1}^dr_j\,x_j\Big)\ ,\quad W(r_1)\,W(r_2)=W(r_1+r_2)\, {\rm e}^{-i\sigma_\omega(r_1,r_2)/2}\ ,
\end{equation}
where $\displaystyle \sigma_\omega(r_1,r_2)=\sum_{k,\ell=1}^d r_{1k}r_{2\ell}
\,\sigma_\omega^{k\ell}$.

\begin{remark}
\label{rem-0}
We shall refer to the emergent bosonic fluctuations as to a mesoscopic description level, in between the microscopic one inherent to the quantum spin chain and the commutative one proper to macroscopic averages.
\end{remark}

The relations between the abstract Weyl algebra $\mathcal{W}(\chi,\sigma_\omega)$ and the local fluctuation operators is as follows. Generic local fluctuation operators are linear combinations of those of the $x_i\in\chi$: 
\begin{equation}
\label{qfa2}
(r,F_N):=\sum_{i=1}^dr_iF_N(x_i)\ ,
\end{equation}
For notational convenience, we shall also use
\begin{equation}
\label{qfa2b}
(r,F_N)=\frac{1}{\sqrt{N_T}}\sum_{k=-N}^{N}q_r^{(k)}=F_N(q_r)\ ,\quad 
q_r=\sum_{i=1}^dr_i\left(x_i-\omega(x_i)\right)\in M_p(\mathbb{C})\ ,
\end{equation}
and introduce local Weyl-like operators
\begin{equation}
\label{localWeyl}
W_N(r)={\rm e}^{i(r,F_N)}={\rm e}^{i\,F_N(q_r)}\qquad\forall r\in\mathbb{R}^d\ .
\end{equation}

For a system with normal quantum fluctuations, it is always possible to find a state on 
$\mathcal{W}(\chi,\sigma_\omega)$ with the properties of a normal Gaussian state \cite{Verbeure}. 

\begin{theorem}
\label{thm1}
If the system $\left(\omega,\chi\right)$ has normal quantum fluctuations, there exists a so-called quasi-free (also known as Gaussian) state $\Omega$ on $\mathcal{W}\left(\chi,\sigma_\omega\right)$  such that:
\begin{eqnarray}
\label{qfa1a}
\lim_{N\to+\infty}\omega\left(W_N(r)\right)&=&\Omega\left(W(r)\right)=
{\rm e}^{-\frac{1}{2}\,(r,\Sigma_\omega\,r)}\\ 
\label{qfa1b}
\lim_{N\to\infty}\omega\left(W_N(r_1)W_N(r_2)\right)&=&{\rm e}^{-(r_1+r_2,\Sigma_\omega r_1+r_2)/2\,-\,i\sigma_\omega(r_1,r_2)/2}
=\Omega\left(W(r_1)W(r_2)\right)\ ,
\end{eqnarray}
where $r_{1,2}\in\mathbb{R}^d$ and $\Sigma_\omega$ is a real symmetric $d\times d$ covariance matrix with entries
\begin{equation}
\label{qfa1aa}
\Sigma_\omega^{ij}=\lim_{N\to+\infty}\frac{1}{2}\omega\left(\left\{F_N(x_i)\,,\,F_N(x_j)\right\}\right)\ ,
\end{equation}
where $\left\{X\,,\,Y\right\}$ denotes the anti-commutator.
\end{theorem}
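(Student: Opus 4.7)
The plan is to construct $\Omega$ directly via its characteristic function and then establish the two stated limits in turn, leveraging the hypothesis of normal quantum fluctuations plus clustering.

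First, I would define $\Omega(W(r)):=e^{-(r,\Sigma_\omega r)/2}$ and verify this extends to a well-defined quasi-free state on $\mathcal{W}(\chi,\sigma_\omega)$. The required positivity reduces to the matrix inequality $\Sigma_\omega+\tfrac{i}{2}\sigma_\omega\ge 0$. This follows because the hermitian matrix $M_N^{ij}:=\omega\bigl(F_N(x_i)F_N(x_j)\bigr)$ is positive semi-definite at each $N$---one has $\omega(Y^\dagger Y)\ge 0$ for $Y=\sum_j\alpha_j F_N(x_j)$ with each $F_N(x_j)$ hermitian---and its symmetric and antisymmetric parts converge by hypothesis to $\Sigma_\omega^{ij}$ and $\tfrac{i}{2}\sigma_\omega^{ij}$ respectively.

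For the one-point limit~(\ref{qfa1a}), rewrite $W_N(r)=e^{iF_N(q_r)}$ via~(\ref{qfa2b}), with $q_r$ itself a single-site hermitian operator. One needs a joint central limit theorem extending the single-observable statement in Definition~\ref{NQF}: the summability condition for $q_r$ follows by triangle inequality from the summability of the cross-correlations of the $x_i\in\chi$, and the characteristic function limit is then obtained by applying the quantum CLT of Goderis--Verbeure--Vets to $F_N(q_r)$. The limiting variance is computed as
\begin{equation*}
\lim_{N\to\infty}\omega\bigl(F_N(q_r)^2\bigr)=\sum_{ij}r_i r_j\lim_{N\to\infty}\omega\bigl(F_N(x_i)F_N(x_j)\bigr)=\sum_{ij}r_i r_j\,\Sigma_\omega^{ij}=(r,\Sigma_\omega r),
\end{equation*}
where the antisymmetric part of $\omega(F_N(x_i)F_N(x_j))$ drops out by symmetry of $r_i r_j$.

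The two-point limit~(\ref{qfa1b}) is the most delicate step. The idea is a Baker--Campbell--Hausdorff argument that becomes exact in the large-$N$ limit. A direct computation gives $\bigl[F_N(q_{r_1}),F_N(q_{r_2})\bigr]=\tfrac{1}{N_T}\sum_k[q_{r_1},q_{r_2}]^{(k)}$, a bounded mean-field average; while any further nested commutator such as $\bigl[F_N(q_{r_3}),\tfrac{1}{N_T}\sum_k[q_{r_1},q_{r_2}]^{(k)}\bigr]$ acquires an extra factor $1/\sqrt{N_T}$ from the only surviving on-site contributions and hence vanishes in norm. One therefore expects
\begin{equation*}
W_N(r_1)W_N(r_2)\;\sim\;\exp\!\Bigl(-\tfrac{1}{2}\bigl[F_N(q_{r_1}),F_N(q_{r_2})\bigr]\Bigr)\,W_N(r_1+r_2).
\end{equation*}
Replacing the mean-field commutator by its scalar limit $i\sigma_\omega(r_1,r_2)$ inside the state (allowed by clustering of $\omega$) and invoking the already-established one-point limit for $W_N(r_1+r_2)$ then yields the claim.

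The main obstacle will be turning the formal BCH step into a rigorous statement, since one must control the full series and not just individual terms. A clean route is to write the difference $W_N(r_1)W_N(r_2)-e^{-i\sigma_\omega(r_1,r_2)/2}W_N(r_1+r_2)$ as a Duhamel-type integral over an interpolation parameter $s\in[0,1]$ and bound the residual by combining operator-norm estimates on iterated commutators with the clustering decay of $\omega$; the latter permits expectations of mean-field averages times polynomials in the $F_N(x_i)$ to be replaced by the product of the scalar limit and the polynomial expectation. This clustering-plus-boundedness scheme is the technical heart of the quantum central limit theorem and is what forces the non-commuting quantum averages to display Gaussian large-$N$ behaviour.
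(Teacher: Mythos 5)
The paper does not actually prove Theorem \ref{thm1}: it imports it from the quantum central limit theorem of Goderis--Verbeure--Vets as presented in \cite{Verbeure}, so there is no internal proof to compare against. Your reconstruction is the standard argument from that literature and is sound in outline: positivity of the quasi-free state via $\Sigma_\omega+\tfrac{i}{2}\sigma_\omega\ge 0$, obtained as the limit of the positive matrices $\omega\bigl(F_N(x_i)F_N(x_j)\bigr)$; the one-point limit as a central limit theorem for the single hermitian observable $q_r$ of \eqref{qfa2b} (which is in fact a classical CLT, since the $q_r^{(k)}$ at different sites mutually commute); and the two-point limit via the asymptotic Baker--Campbell--Hausdorff identity, which is exactly the norm estimate the paper quotes in Remark \ref{rem0}. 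Two steps need more care than you give them. First, Definition \ref{NQF} as literally stated postulates the Gaussian characteristic-function limit only for the individual generators $x_i\in\chi$, not for linear combinations $q_r$; clustering together with summable covariances does not by itself yield a CLT for a stationary sequence, so the hypothesis must be read (as it is in \cite{Verbeure}) as holding on the whole self-adjoint linear span of $\chi$ --- your appeal to the quantum CLT at this point is really an invocation of the hypothesis in that stronger form rather than a derivation from the weaker one. Second, in the BCH step the mean-field commutator sits inside an exponential and converges to the scalar $i\sigma_\omega(r_1,r_2)$ only weakly, through $L^2(\omega)$-convergence and vanishing commutators with the Weyl-like factors, never in norm; your Duhamel-plus-Cauchy--Schwarz scheme is the correct way to control this, and it is the same mechanism the paper isolates in Lemma \ref{mean2} for the analogous dynamical objects.
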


\begin{remark}
\label{rem0}
Actually, it can be proved \cite{Verbeure} that
$$
\lim_{N\to\infty}\left\|W_N(r_1)\,W_N(r_2)\,-\,W_N(r_1+r_2)\,{\rm e}^{-[(r_1,F_N)\,,\,(r_2,F_N)]/2}\right\|=0
$$
from which it follows that \eqref{qfa1b} can be extended to all products of Weyl-like operators, whence:
\begin{equation}
\label{rem0ref}
\lim_{N\to\infty}\omega\left(W_N(a)\,W_N(r)\,W_N(b)\right)=
\Omega\left(W(a)\,W(r)\,W(b)\right)
\end{equation}
for all $a,b,r\in\mathbb{R}^d$.
\end{remark}

Given the state $\Omega$ on the Weyl algebra $\mathcal{W}\left(\chi,\sigma_\omega\right)$, it turns out that one can construct a regular representation of the Weyl operators as acting on a suitable Hilbert space whereby they can be identified with exponentials of hermitian operators:
\begin{equation}
\label{qfa1c}
W(r) = e^{i\,(r,F)}\ ,\quad (r,F)=\sum_{i=1}^d\,r_iF(x_i)\ ,
\end{equation}
where $F$ is the $d$-dimensional vector of components $F(x_i)$ that
can thus be identified as the limits of the local quantum fluctuations operators 
$F_N(x_i)$.

The construction of the algebra of quantum fluctuation can be summarized by saying that Theorem \ref{thm1} justifies the following large $N$ correspondence 
$$
W_N(r)={\rm e}^{i(r,F_N)}\simeq W(r):={\rm e}^{i(r,F)}\ ,\quad 
\omega(W_N(r))\simeq\Omega(W(r))\ ,
$$ 
where $W_N(r)$ are local Weyl-like operators obtained as exponentials of local fluctuation operators and $W(r)$ are Weyl operators, while $\omega$ is the microscopic state on the spin chain quasi-local algebra $\mathcal{A}$ and $\Omega$ is a Gaussian state on the algebra generated by the $W(r)$ that satisfies equation \eqref{qfa1a}.

\section{Microscopic Dissipative Dynamics}

Given the algebra of quantum fluctuations, a relevant issue regards the mesoscopic dynamics inherited from a given microscopic time-evolution.
So far, only unitary microscopic dynamics have been considered and these have given rise to quasi-free, unitary
time-evolutions \cite{Verbeure}, namely to reversible dynamics transforming Weyl operators into Weyl operators.

Instead, in the following we shall focus upon a quantum spin chain undergoing an irreversible dissipative microscopic dynamics due to the presence of an environment to which the chain is weakly coupled.
This setting is typical of open quantum systems, thus the quantum spin chain is affected by decoherence due to noise and dissipation \cite{Alicki}.  

The main purpose of the following section is to show that, from a 
Lindblad-type microscopic dissipative dynamics which, in a way to be specified later on, is consistent with the chosen set of fluctuations, one obtains a mesoscopic quasi-free (also known as Gaussian) dissipative semigroup on the fluctuation level. 

Concretely, we shall study the fluctuation time-evolution emerging from a microscopic irreversible dynamics that, in the Heisenberg picture, corresponds to the local time-evolution equation
\begin{equation}
\label{dissdyn}
\partial_t X_t\,=\,\mathbb{L}_N[X_t]\ ,
\end{equation}
where $X\in\mathcal{A}_{[-N,N]}$ and $\mathbb{L}_N$ is a generator of Lindblad form, 
\begin{equation}
\mathbb{L}_N\left[X\right]=\mathbb{H}_N\left[X\right]+\mathbb{D}_N\left[X\right]\ .
\label{gen}
\end{equation}
Here, $\mathbb{H}_N\left[X\right]=i\left[H_N,X\right]$ is a translation invariant Hamiltonian term, where
\begin{equation}
\label{tih}
H_N=\sum_{k=-N}^{N}h^{(k)}\ ,
\end{equation}
with single site Hamiltonian $h=h^\dagger$, while $\mathbb{D}_N[X]$ is a dissipative term of the form:
\begin{eqnarray}
\label{lind1}
\mathbb{D}_N\left[X\right]&=&
\sum_{k,\ell=-N}^{N}J_{k\ell}\sum_{\mu,\nu=1}^{m}\,D_{\mu\nu}\,\Bigg(v_\mu^{(k)}\,X\,
(v_{\nu}^\dagger)^{(\ell)}\,-\,\frac{1}{2}\Big\{v_\mu^{(k)}\,(v_\nu^\dagger)^{(\ell)}\,,\,X\Big\}\Bigg)\\
\label{lind2}
&=&\sum_{k,\ell=-N}^{N}J_{k\ell}\sum_{\mu,\nu=1}^{m}\frac{D_{\mu\nu}}{2}\Bigg(\Big[v_\mu^{(k)}\,,\,X\Big]\,(v_{\nu}^\dagger)^{(\ell)}\,+\,v_\mu^{(k)}\,\Big[X\,,\,(v_\nu^\dagger)^{(\ell)}\Big]\Bigg)\ .
\end{eqnarray}
In the above expression, the operators $v_\nu$, $\nu=1,2,\ldots,m$, are single site Kraus operators and the coefficients $J_{k\ell}$ and $D_{\mu\nu}$ make for two matrices $J\geq 0$ and $D\geq 0$: the overall Kossakowski matrix $J\otimes D$ must be chosen  positive semi-definite in order to guarantee the complete positivity of the local dissipative time-evolution $\gamma_N^t=\exp(t\,\mathbb{L}_N)$  generated by $\mathbb{L}_N$ \cite{Alicki}.
The matrix $D$ encodes the information about the single site effects due to the presence of the environment, while the matrix $J$ accounts for the strength of these effects between different sites. 
Translational invariance of the dissipative contribution to the Lindblad generator is achieved by $J_{k\ell}$ such that:
\begin{equation}
\label{translinv}
J_{k\ell}\equiv J(k-\ell)\ ,\quad J_{kk}=J(0)>0\qquad\forall\,k,\ell\,\in\mathbb{Z}\ .
\end{equation}
We shall also assume that 
\begin{equation}
\label{assum0}
\sum_{\ell=-\infty}^{\infty}\left|J_{k\ell}\right|=\sum_{p=-\infty}^{\infty}\left|J(p)\right|<\infty\qquad\forall\,k\in\mathbb{Z}\ ,
\end{equation}
namely that the strength of the statistical coupling of far separated chain sites due to the environment be vanishingly small.

\subsection{Locality Conditions}
\label{loccondsec}

Our purpose in the following is to derive from the microscopic dissipative spin chain dynamics outlined above a mesoscopic dynamics for the large $N$ limits, $W(r)$, of the Weyl-like exponentials of local fluctuations, $W_N(r)={\rm e}^{i(r,F_N)}$, corresponding to the chosen set of single-site observables $x_i\in\chi$.
In general, the action of the local Lindblad generator $\mathbb{L}_N$ on $(r,F_N)=\sum_{j=1}^dr_jF_N(x_j)$ maps it out of the linear span of the elements of the set $\chi$, possibly generating operators that are linear combinations of tensor products supported by the whole interval $[-N,N]$. 

In order to recover, out of the action of $\mathbb{L}_N$, a mesoscopic dynamics for the Weyl algebra generated by the Weyl operators $W(r)$ we shall then assume that 
\begin{equation}
\label{assump0}
\mathbb{L}_N[x^{(k)}_i]=\sum_{j=1}^d\mathcal{L}_{ij}x_j^{(k)}\ ,\quad\mathcal{L}=\mathcal{H}+\mathcal{D}\ ,
\end{equation}
for all $x_i\in \chi$, $k\in\left[-N,N\right]$, where $\mathcal{L}$ is the $d\times d$ matrix with entries $\mathcal{L}_{ij}$ and $\mathcal{H}$, $\mathcal{D}$ are the $d\times d$ matrices with entries defined by
\begin{equation}
\label{assump1}
i\left[H_N,x_i^{(k)}\right]=
\sum_{j=1}^{d}\mathcal{H}_{ij}x_j^{(k)}\ ,\quad
\mathbb{D}_N\left[x_i^{(k)}\right]=\sum_{j=1}^{d}\mathcal{D}_{ij}x_j^{(k)} \ .
\end{equation} 
The above one is a request of both locality and consistency with the chosen set 
$\chi$; indeed, it amounts to asking that the linear span of $\chi$ is mapped into itself by $\mathbb{L}_N$. Moreover, in Appendix A, it is proved that \eqref{assump0} is equivalent to the following property of the generator.

\begin{lemma}
\label{lemma5}
Given a single-site matrix basis $\{o_\alpha\}_{\alpha=1}^{p^2}$ in $M_p(\mathbb{C})$ and a generator $\mathbb{L}_N$ satisfying 
$$
\mathbb{L}_N\left[o_\alpha^{(k)}\right]=\sum_{\beta=1}^{p^2}\,c_{\alpha\beta}^{k}\,o_\beta^{(k)}\ ,
$$
then:
$$
\mathbb{L}_N\left[o_{\alpha_1}^{(k_1)}o_{\alpha_2}^{(k_2)}\dots o_{\alpha_n}^{(k_n)}\right]=\sum_{\bar{\beta}}\,c^{\bar{k}}_{\bar{\alpha}\,\bar{\beta}}\, o_{\beta_1}^{(k_1)} o_{\beta_2}^{(k_2)}\dots o_{\beta_n}^{(k_n)} \ ,
$$
where the bar denotes multi-indices, {\it e. g.} $\bar{\alpha}=(\alpha_1,\alpha_2,\dots,\alpha_n)$, and  $c_{\bar{\alpha},\bar{\beta}}^{\bar{k}}$ are suitable coefficients.
\end{lemma}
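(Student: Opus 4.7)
The plan is to recast the claim as a containment,
$\mathbb{L}_N\bigl[o_{\alpha_1}^{(k_1)}\,o_{\alpha_2}^{(k_2)}\cdots o_{\alpha_n}^{(k_n)}\bigr] \in \bigotimes_{j=1}^{n}\mathcal{A}^{(k_j)}$,
and then to observe that since $\{o_\alpha\}_{\alpha=1}^{p^2}$ is a basis of $M_p(\mathbb{C})$, the products $o_{\beta_1}^{(k_1)}\cdots o_{\beta_n}^{(k_n)}$ form a basis of $\bigotimes_{j=1}^{n}\mathcal{A}^{(k_j)}$. The coefficients $c^{\bar k}_{\bar\alpha,\bar\beta}$ will then emerge simply as the coordinates of the left-hand side in that basis.

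I would establish the containment by induction on $n$. The base case $n=1$ is exactly the hypothesis. For the inductive step, write the product as $X=Y\cdot Z$ with $Y = o_{\alpha_1}^{(k_1)}\cdots o_{\alpha_{n-1}}^{(k_{n-1})}$ and $Z = o_{\alpha_n}^{(k_n)}$, whose supports $P_Y=\{k_1,\dots,k_{n-1}\}$ and $\{k_n\}$ are disjoint, and compute the defect of $\mathbb{L}_N$ from being a derivation on $YZ$. The Hamiltonian part $i[H_N,\cdot]$ is a derivation and contributes zero; for the dissipative part, expanding (\ref{lind2}) via the Leibniz rule $[v_\mu^{(k)},YZ] = [v_\mu^{(k)},Y]Z + Y[v_\mu^{(k)},Z]$ (and similarly for $[YZ,v_\nu^{\dagger(\ell)}]$) and cancelling the pieces shared with $\mathbb{D}_N[Y]Z + Y\mathbb{D}_N[Z]$, a short computation yields
\begin{equation*}
\mathbb{L}_N[YZ]-\mathbb{L}_N[Y]\,Z-Y\,\mathbb{L}_N[Z] \;=\; \sum_{k,\ell,\mu,\nu} J_{k\ell}\,D_{\mu\nu}\;\bigl[v_\mu^{(k)},Y\bigr]\,\bigl[Z,v_\nu^{\dagger(\ell)}\bigr].
\end{equation*}
Each summand vanishes unless $k\in P_Y$ (in which case $[v_\mu^{(k)},Y]\in\bigotimes_{j<n}\mathcal{A}^{(k_j)}$) and $\ell=k_n$ (in which case $[Z,v_\nu^{\dagger(\ell)}]\in\mathcal{A}^{(k_n)}$); disjointness of the two supports places the defect in $\bigotimes_{j=1}^{n}\mathcal{A}^{(k_j)}$. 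Combined with $\mathbb{L}_N[Y]\cdot Z$ (in that algebra by the inductive hypothesis together with $Z\in\mathcal{A}^{(k_n)}$) and $Y\cdot\mathbb{L}_N[Z]$ (in that algebra by the base case), this yields $\mathbb{L}_N[YZ]\in\bigotimes_{j=1}^{n}\mathcal{A}^{(k_j)}$ and closes the induction.

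The hard step is the derivation-defect identity above. Its success hinges on the form (\ref{lind2}) of the dissipative generator, in which every surviving contribution is bracketed with the argument by a commutator. Because of this, the Kraus operators $v_\mu^{(k)}$ and $v_\nu^{\dagger(\ell)}$ cannot spread the support beyond $P=P_Y\cup\{k_n\}$: whenever $k$ or $\ell$ falls outside the support of the factor it is commuted with, the bracket simply vanishes. It is this automatic localization, rather than any delicate cancellation among the coefficients $J_{k\ell}$ and $D_{\mu\nu}$, that allows the single-site hypothesis to propagate to the multi-site conclusion.
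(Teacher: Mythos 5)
Your proof is correct and follows essentially the same route as the paper's Appendix A: an induction on the number of factors in which the Hamiltonian part is a derivation and the dissipative derivation defect $\mathbb{D}_N[YZ]-\mathbb{D}_N[Y]\,Z-Y\,\mathbb{D}_N[Z]=\sum_{k,\ell,\mu,\nu}J_{k\ell}D_{\mu\nu}\,[v_\mu^{(k)},Y]\,[Z,(v_\nu^\dagger)^{(\ell)}]$ is localized because the commutators vanish unless $k$ and $\ell$ lie in the existing supports. The only difference is cosmetic (you peel off the last factor from $n$ rather than appending an $(n{+}1)$-th), so there is nothing to add.
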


\begin{remark}
\label{rem}
The local Lindblad generator $\mathbb{L}_N$ studied in \cite{BCF} has the form 
$$
\mathbb{L}_N[X_N]=i\Big[\sum_{k=0}^{N-1}h^{(k)}\,,\,X_N\Big]\,+\,\sum_{k,\ell=0}^{N-1}J_{k\ell}\sum_{\mu,\nu=1}^4\frac{D_{\mu\nu}}{2}\Big[\Big[v_\mu^{(k)}\,,\,X_N\Big],(v_{\nu}^\dag)^{(\ell)}\Big]\ .
$$
In this case, the reduction from the general Lindblad form \eqref{lind1} to the above is obtained by choosing $J_{k\ell}\equiv J(|k-\ell|)=J_{\ell k}$ and the matrix $D$ of the form 
$$
D=
\begin{pmatrix}
d&0&\gamma&\gamma\\
0&d&\gamma&\gamma\\
\gamma&\gamma&d&0\\
\gamma&\gamma&0&d
\end{pmatrix}\ ,\qquad d>0\ ,\quad |\gamma|\leq \frac{d}{2}\ .
$$
Then, it is the appearance of a double commutators that makes the generator obey \eqref{assump0}.
\end{remark}

The following result characterizes the action of local Lindbald generators satisfying \eqref{assump0}. 

\begin{proposition}
Given a set $\chi$ of single site operators and a generator $\mathbb{L}_N$ 
satisfying conditions \eqref{assum0} and \eqref{assump0}, the action of the generator on $W_N(r)={\rm e}^{i\left(r,F_N\right)}$ is such that
\begin{eqnarray}
\label{prop1a}
&&
\lim_{N\to+\infty}\Bigg\|\mathbb{L}_N\left[W_N(r)\right]-\Bigg(\frac{i}{\sqrt{N_T}}\sum_{k=-N}
^N\sum_{i,j=1}^{d}r_i\left(\mathcal{H}_{ij}+\mathcal{D}_{ij}\right)\,x_j\Bigg)\,
\,W_N(r)\\
&&\hskip 2cm
+\,\frac{1}{2}\Big[(r,F_N)\,,\,(r,(\mathcal{H}+\mathcal{D})F_N)\Big]\,\,W_N(r)\,-\,S(r;N)\,W_N(r)\Bigg\|=0\ ,
\label{thesis1}
\end{eqnarray}
where 
\begin{eqnarray}
\label{prop1b}
S(r;N)&=&\frac{1}{2}\Big(\mathbb{L}_N\left[(r,F_N)\right]\,(r,F_N)\,+\,(r,F_N)\,\mathbb{L}_N\left[(r,F_N)\right]\,-\,\mathbb{L}_N\left[(r,F_N)^2\right]\Big)\\ 
\nonumber
\mathbb{L}_N[(r,F_N)]&=&\frac{1}{\sqrt{N_T}}\sum_{k=-N}^N\sum_{i,j=1}^d r_i(\mathcal{H}_{ij}+\mathcal{D}_{ij})\,x_j^{(k)}\\
\label{prop1d}
&=&\Big(r,(\mathcal{H}+\mathcal{D})\,F_N\Big)\,+\,\sqrt{N_T}\,\Big(r,(\mathcal{H}+\mathcal{D})\,x_\omega\Big)\ ,
\end{eqnarray}
with $x_\omega\in\mathbb{R}^d$ a real vector with components 
$\omega(x_i)$, $x_i\in\chi$.
\label{action}
\end{proposition}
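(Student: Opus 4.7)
I would treat $W_N(r) = e^{iG}$ with $G := (r,F_N)$ as a one-parameter family and compute $\mathbb{L}_N[W_N(r)]$ via a Duhamel-type integral. The basic algebraic input is the \emph{Leibniz defect} of the Lindblad generator, which follows from \eqref{lind1} by direct computation:
\begin{equation*}
\mathbb{L}_N[AB] = \mathbb{L}_N[A]\,B + A\,\mathbb{L}_N[B] + \Phi_N(A,B),\qquad
\Phi_N(A,B) := \sum_{k,\ell=-N}^{N}J_{k\ell}\sum_{\mu,\nu=1}^m D_{\mu\nu}\,[v_\mu^{(k)},A]\,[B,(v_\nu^\dagger)^{(\ell)}],
\end{equation*}
the Hamiltonian piece being a derivation. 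Setting $A=B=G$ and comparing with \eqref{prop1b} immediately gives $S(r;N) = -\tfrac{1}{2}\Phi_N(G,G)$, which will feed in as the $S$-term of the thesis.

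\textbf{Duhamel ODE.} Defining $f(s) := \mathbb{L}_N[e^{isG}]$, the Leibniz defect yields
\begin{equation*}
f'(s) = i\,\mathbb{L}_N[G\,e^{isG}] = iG\,f(s) + i\mathbb{L}_N[G]\,e^{isG} + i\Phi_N(G,e^{isG}),\qquad f(0)=0.
\end{equation*}
The integrating factor $e^{-isG}$ (which commutes with $G$) turns this into $\frac{d}{ds}(e^{-isG}f(s)) = e^{-isG}g(s)$ with $g$ the inhomogeneity, and integrating from $0$ to $1$ gives
\begin{equation*}
\mathbb{L}_N[W_N(r)] = \int_0^1 e^{i(1-u)G}\,\Big\{i\,\mathbb{L}_N[G]\,e^{iuG} + i\,\Phi_N(G,e^{iuG})\Big\}\,du.
\end{equation*}

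\textbf{BCH estimates.} The locality assumption \eqref{assump0} combined with different-site commutativity yields the crucial identity $[G,X^{(k)}] = N_T^{-1/2}[q_r^{(k)},X^{(k)}]$ for any single-site $X^{(k)}$, because only the same-site term of $G$ contributes. Applied iteratively to operators of local-average form $N_T^{-a}\sum_k z^{(k)}$, each further $\mathrm{ad}_G$ gains a factor $O(N_T^{-1/2})$; together with \eqref{assum0} this gives $\|\mathrm{ad}_G^n(\mathbb{L}_N[G])\| = O(N_T^{(1-n)/2})$ and $\|\mathrm{ad}_G^n(\Phi_N(G,G))\| = O(N_T^{-n/2})$ for $n\ge 1$. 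The Baker--Campbell--Hausdorff series for the adjoint action therefore converges with uniform-in-$u$, $O(N_T^{-1/2})$ tails:
\begin{equation*}
e^{i(1-u)G}\,\mathbb{L}_N[G]\,e^{iuG} = \Big(\mathbb{L}_N[G] + i(1-u)[G,\mathbb{L}_N[G]]\Big)e^{iG} + O(N_T^{-1/2}).
\end{equation*}
The $\Phi_N$-term is handled analogously: the single-site expansion $[e^{iuG},(v_\nu^\dagger)^{(\ell)}] = (iu[G,(v_\nu^\dagger)^{(\ell)}] + O(N_T^{-1}))e^{iuG}$, summed against $J_{k\ell}D_{\mu\nu}$, gives $\Phi_N(G,e^{iuG}) = iu\,\Phi_N(G,G)\,e^{iuG} + O(N_T^{-1/2})$, and commuting $\Phi_N(G,G)$ past $e^{i(1-u)G}$ costs another $O(N_T^{-1/2})$.

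\textbf{Assembly and main difficulty.} Inserting both expansions into the Duhamel integral and using $\int_0^1(1-u)\,du = \int_0^1 u\,du = 1/2$, one obtains, modulo $O(N_T^{-1/2})$,
\begin{equation*}
\mathbb{L}_N[W_N(r)] = i\,\mathbb{L}_N[G]\,W_N(r) - \tfrac{1}{2}[G,\mathbb{L}_N[G]]\,W_N(r) - \tfrac{1}{2}\,\Phi_N(G,G)\,W_N(r).
\end{equation*}
Because the scalar part $\sqrt{N_T}(r,\mathcal{L}x_\omega)$ of \eqref{prop1d} commutes, $[G,\mathbb{L}_N[G]] = [(r,F_N),(r,(\mathcal{H}+\mathcal{D})F_N)]$, and the third term is $S(r;N)\,W_N(r)$ by Step~1; identifying the first with the expression in \eqref{prop1a} then yields \eqref{thesis1}. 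The genuine difficulty is Step~3: because $\|G\| = O(\sqrt{N_T})$ the naive bound on $u^n\,\mathrm{ad}_G^n/n!$ diverges, and the $O(N_T^{-1/2})$ remainders can only be obtained by exploiting the local-average structure that \eqref{assump0} enforces on both $\mathbb{L}_N[G]$ and $\Phi_N(G,G)$, together with the $\ell^1$-summability \eqref{assum0} of the dissipative couplings.
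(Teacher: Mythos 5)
Your proof is correct, and it reaches the statement by a genuinely different route than the paper. The paper (Appendices B and C) works by brute force: it writes $[z^{(k)},W_N(r)]=-U_{q_r^{(k)}}[z^{(k)}]\,W_N(r)$ using the site-factorisation of $W_N(r)$, substitutes this into the explicit Lindblad form, truncates each multicommutator series $U_{q_r^{(k)}}$ after two terms, and then regroups the surviving contributions into the drift, the double commutator and $S(r;N)$ by means of ad hoc algebraic identities (the same identity you use, $b(a[d,c]+[a,d]c)+(a[b,c]+[a,b]c)d-a[bd,c]-[a,bd]c=2[b,a][d,c]$, appears there only at the very end to recognise $S(r;N)$ as the Leibniz defect). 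You instead take the Leibniz-defect identity $\mathbb{L}_N[AB]=\mathbb{L}_N[A]B+A\mathbb{L}_N[B]+\Phi_N(A,B)$ as the structural starting point — which the paper proves in Appendix A but uses for a different purpose (Lemma 1) — identify $S(r;N)=-\tfrac12\Phi_N(G,G)$ at the outset, and extract the three terms from a Duhamel integral for $\mathbb{L}_N[e^{iG}]$, with the $\int_0^1(1-u)\,du=\int_0^1 u\,du=\tfrac12$ factors producing the two $\tfrac12$'s automatically. The analytic core is the same in both arguments: each $\mathrm{ad}_G$ acting on a single-site or local-average operator costs $N_T^{-1/2}$ because only the same-site summand of $G$ contributes, and the $\ell^1$-summability \eqref{assum0} controls the double sums; your closing remark correctly identifies this as the point where a naive bound via $\|G\|=O(\sqrt{N_T})$ would fail. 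What your organisation buys is a cleaner separation between algebra (the second-order-derivation structure of $\mathbb{L}_N$) and analysis (the adjoint-action estimates), and it makes transparent why exactly the quadratic remainder $S(r;N)$ — and nothing of higher order in the site couplings — survives; what the paper's version buys is an explicit expression for every discarded term, which it then reuses when tracking the support of $S(r;N)$ in the proof of Theorem 2.
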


The proof is subdivided in two lemmas concerning the large $N$ approximation of the Hamiltonian and of the dissipative terms of the Lindblad generator: their proofs are given in Appendix B, respectively C.

\begin{lemma}
\label{lemma1}
For large $N$, the Hamiltonian action of the Lindblad generator can be approximated as follows:
\begin{equation}
i\left[H_N\,,\,W_N(r)\right]\simeq \left(i\frac{1}{\sqrt{N_T}}\sum_{k=-N}^{N}\sum_{i,j=1}^{d}r_i\mathcal{H}_{ij}x_j^{(k)}-\frac{1}{2}\Big[\left(r,F_N\right)\,,\,\left(r,\mathcal{H}F_N\right)\Big]\right)\,W_N(r)\ ,
\label{piece1}
\end{equation}
the error vanishing in norm.
\end{lemma}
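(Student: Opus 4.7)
The plan is to expand $[H_N,W_N(r)]=[H_N,e^{iB}]$ with $B:=(r,F_N)$ as a second-order Baker--Campbell--Hausdorff-type expansion and to show the residual triple-commutator vanishes in norm. The locality assumption \eqref{assump0}, combined with the fact that spin operators at different sites commute, gives
$$[H_N,(r,F_N)] = \frac{1}{\sqrt{N_T}}\sum_{k=-N}^{N}\sum_{i=1}^d r_i\,[h^{(k)},x_i^{(k)}] = -\frac{i}{\sqrt{N_T}}\sum_{k,i,j}r_i\mathcal{H}_{ij}\,x_j^{(k)},$$
which decomposes as $-i(r,\mathcal{H}F_N)$ plus a scalar multiple of the identity proportional to $\sqrt{N_T}$. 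This scalar is central and so drops out of any further nested commutator with $B$.

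Next, setting $g(s):=e^{isB}H_N e^{-isB}$, one has $g^{(n)}(s)=i^n\,\mathrm{ad}_B^n(g(s))$, so Taylor's theorem with integral remainder at order two yields
$$e^{iB}H_N e^{-iB}-H_N = i[B,H_N]-\tfrac12[B,[B,H_N]] - \tfrac{i}{2}\int_0^1(1-s)^2\, e^{isB}[B,[B,[B,H_N]]]e^{-isB}\,ds.$$
Combining this with the identity $[H_N,e^{iB}]=-(e^{iB}H_N e^{-iB}-H_N)\,e^{iB}$ and multiplying by $i$ gives $i[H_N,W_N(r)]=([B,H_N]+\tfrac{i}{2}[B,[B,H_N]]+R_N)\,W_N(r)$. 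The first term coincides with the leading piece of \eqref{piece1}; inserting the explicit form of $[B,H_N]$ and again using the centrality of the scalar summand identifies $\tfrac{i}{2}[B,[B,H_N]]$ with $-\tfrac12[(r,F_N),(r,\mathcal{H}F_N)]$.

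The substantive point is the bound $\|R_N\|\to 0$. Since $e^{\pm isB}$ is unitary it suffices to control $\|[B,[B,[B,H_N]]]\|$. The key mechanism is that each commutator with $B=(r,F_N)$ contracts a double sum over sites via $[q_r^{(k)},z^{(\ell)}]=\delta_{k\ell}[q_r,z]^{(k)}$, introducing a further factor $1/\sqrt{N_T}$ while leaving $N_T$ summands of bounded single-site operator norm. A direct count thereby produces the hierarchy $\|[B,H_N]\|=O(\sqrt{N_T})$, $\|[B,[B,H_N]]\|=O(1)$ and $\|[B,[B,[B,H_N]]]\|=O(N_T^{-1/2})$. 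Together with $\int_0^1(1-s)^2/2\,ds=1/6$, this gives $\|R_N\|\leq \tfrac16\|[B,[B,[B,H_N]]]\|=O(N_T^{-1/2})\to 0$, and since right-multiplication by the unitary $W_N(r)$ preserves the norm, \eqref{piece1} follows.
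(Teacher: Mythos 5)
Your proof is correct. The substance is the same as the paper's: both arguments hinge on the locality assumption \eqref{assump0} to identify $[H_N,(r,F_N)]$ with $-i(r,\mathcal{H}F_N)$ modulo a central scalar, and on the site-diagonal contraction $[q_r^{(k)},z^{(\ell)}]=\delta_{k\ell}[q_r,z]^{(k)}$ to establish that the $n$-fold nested commutator with $(r,F_N)$ scales as $N_T^{1-n/2}$, so that everything beyond second order is negligible. Where you differ is in the bookkeeping of the error: the paper exploits the factorization $W_N(r)=\prod_k{\rm e}^{iq_r^{(k)}/\sqrt{N_T}}$ to conjugate each $h^{(k)}$ separately, expands each conjugation as the full exponential series $U_{q_r^{(k)}}[h^{(k)}]=\sum_{n\geq1}\frac{i^n}{n!N_T^{n/2}}\mathbb{K}^n_{q_r^{(k)}}[h^{(k)}]$, and bounds the infinite tail per site by ${\rm e}^{2\|q_r\|}\|h\|/N_T^{3/2}$ before summing over the $N_T$ sites; you instead work globally with $B=(r,F_N)$ and apply Taylor's theorem with integral remainder to $s\mapsto{\rm e}^{isB}H_N{\rm e}^{-isB}$, so that the whole error is captured by a single triple commutator $\big[B,\big[B,[B,H_N]\big]\big]=O(N_T^{-1/2})$ conjugated by unitaries. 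Your route buys a finite, closed-form remainder in place of an infinite series tail and makes the order of truncation transparent; the paper's route makes the per-site structure explicit, which is then reused almost verbatim for the dissipative part in Lemma \ref{lemma2}, where the double site sum weighted by $J_{k\ell}$ makes the per-site organization more convenient.
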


\begin{lemma}
\label{lemma2}
For large $N$, the action of the dissipative part of the Lindblad generator can be approximated as follows:
\begin{equation}
\mathbb{D}_N\left[W_N(r)\right]\sim \left(i\frac{1}{\sqrt{N_T}}\sum_{k=-N}^{N}\sum_{i,j=1}^d\,r_i\mathcal{D}_{ij}x_j^{(k)}\,-\,\frac{1}{2}\Big[(r,F_N)\,,\,(r,\mathcal{D}F_N)\Big]\,+\,S(r;N)\right)W_N(r)\ ,
\end{equation}
where
$$
S(r;N)=\frac{1}{2}\Big(\mathbb{L}_N\left[(r,F_N)\right]\,(r,F_N)\,+\,(r,F_N)\,\mathbb{L}_N\left[(r,F_N)\right]\,-\,\mathbb{L}_N\left[(r,F_N)^2\right]\Big)\ ,
$$
the error vanishing in norm.
\end{lemma}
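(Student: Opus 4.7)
My strategy is to expand the conjugation $W_N(r)\,V\,W_N(r)^{-1}$ for single-site $V$ via a Baker--Campbell--Hausdorff series, substitute into the dissipator in the form \eqref{lind1} (equivalently \eqref{lind2}), and carry every factor $W_N(r)$ to the right. Setting $Q_N\equiv(r,F_N)$, I would first establish, for any single-site operator $V^{(k)}$, the identity
\[W_N(r)\,V^{(k)}\,W_N(r)^{-1}=V^{(k)}+i[Q_N,V^{(k)}]-\tfrac12[Q_N,[Q_N,V^{(k)}]]+R_N(V^{(k)}),\]
with remainder bound $\|R_N(V^{(k)})\|=O(N_T^{-3/2})$. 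The bound follows because $[Q_N,V^{(k)}]=\tfrac{1}{\sqrt{N_T}}[q_r,V]^{(k)}$ is again single-site, so each additional nested commutator contributes a factor $N_T^{-1/2}$; a clean rigorous derivation uses the Duhamel formula $[V,e^{iQ_N}]=\int_0^1\!ds\,e^{isQ_N}\,i[V,Q_N]\,e^{i(1-s)Q_N}$, iterated twice.

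Substituting the expansion into $v_\mu^{(k)}W_N(r)(v_\nu^\dagger)^{(\ell)}-\tfrac12\{v_\mu^{(k)}(v_\nu^\dagger)^{(\ell)},W_N(r)\}$ and reassembling to the right of $W_N(r)$, the zeroth-order contribution vanishes because $\mathbb{D}_N[\1]=0$, and three surviving structures appear: (i) terms linear in $i[Q_N,v]$, which reconstruct $i\,\mathbb{D}_N[(r,F_N)]$, equal to $\tfrac{i}{\sqrt{N_T}}\sum_k\sum_{i,j}r_i\mathcal{D}_{ij}x_j^{(k)}$ by \eqref{assump1}; (ii) terms carrying one double commutator $-\tfrac12[Q_N,[Q_N,v]]$; (iii) cross quadratic pieces which, to leading order, sum to $\tfrac12\sum_{k,\ell,\mu,\nu}J_{k\ell}D_{\mu\nu}[Q_N,v_\mu^{(k)}][Q_N,(v_\nu^\dagger)^{(\ell)}]$. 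Group (iii) is identified with $S(r;N)\,W_N(r)$ via the general Lindblad Leibniz defect
\[\mathbb{D}_N[X^2]-\mathbb{D}_N[X]X-X\mathbb{D}_N[X]=-\sum_{k,\ell,\mu,\nu}J_{k\ell}D_{\mu\nu}[v_\mu^{(k)},X][(v_\nu^\dagger)^{(\ell)},X],\]
evaluated at $X=Q_N$, which gives $S(r;N)=\tfrac12\sum_{k,\ell,\mu,\nu}J_{k\ell}D_{\mu\nu}[Q_N,v_\mu^{(k)}][Q_N,(v_\nu^\dagger)^{(\ell)}]$.

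Identifying group (ii) with $-\tfrac12[Q_N,\mathbb{D}_N[Q_N]]\,W_N(r)$ is the step requiring the most care, and I expect it to be the main obstacle. I would compute $[Q_N,\mathbb{D}_N[Q_N]]$ directly from \eqref{lind2}: the four resulting terms split into single-commutator cross products proportional to $[v_\mu,q_r]^{(k)}[q_r,v_\nu^\dagger]^{(\ell)}$ and $[v_\mu,q_r]^{(k)}[v_\nu^\dagger,q_r]^{(\ell)}$, which cancel pairwise thanks to $[q_r,v_\nu^\dagger]=-[v_\nu^\dagger,q_r]$, and double-commutator terms which, after using Jacobi (note $[q_r,[v_\mu,q_r]]=-[q_r,[q_r,v_\mu]]$ since $[q_r,q_r]=0$), match the structure of group (ii) with the required prefactor $-\tfrac12$. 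Since $\mathbb{D}_N[(r,F_N)]$ and $(r,\mathcal{D}F_N)$ differ only by a scalar multiple of $\1$ originating from the $\omega$ subtraction in $F_N$, the resulting commutator equals $-\tfrac12[(r,F_N),(r,\mathcal{D}F_N)]$. The total remainder, summed with weights $J_{k\ell}D_{\mu\nu}$ and including all higher-order BCH contributions, is controlled by $\sum_{k,\ell}|J_{k\ell}|\le N_T\sum_p|J(p)|=O(N_T)$ (from \eqref{assum0}) times the $O(N_T^{-3/2})$ per-term bound, yielding operator norm $O(N_T^{-1/2})\to 0$ as $N\to\infty$.
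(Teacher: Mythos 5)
Your proposal is correct and follows essentially the same route as the paper's Appendix C: a second-order expansion of the conjugation by $W_N(r)$ with the remainder controlled by $\sum_{k,\ell}|J_{k\ell}|\le N_T\sum_p|J(p)|$, the linear terms reassembling into $i\,\mathbb{D}_N[(r,F_N)]$, the double commutators into $-\tfrac12[(r,F_N),(r,\mathcal{D}F_N)]$ (with the scalar from the $\omega$-subtraction dropping inside the commutator), and the quadratic cross terms identified with $S(r;N)$ via the same Leibniz-defect identity (the paper likewise uses that $\mathbb{H}_N$ is a derivation to promote $\mathbb{D}_N$ to $\mathbb{L}_N$ in $S(r;N)$).
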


\subsection{Time-invariant microscopic state}

In the next section we study the time-evolution of the mesoscopic degrees of freedom corresponding to the microscopic equation \eqref{dissdyn}, assuming  the microscopic state $\omega$ to be left invariant by the dissipative dynamics generated by $\mathbb{L}_N$ and formally represented by the semigroup of local maps 
$\Phi^N_t=\exp(t\,\mathbb{L}_N)$, $t\geq 0$. Namely, we shall assume
\begin{equation}
\label{invstate}
\omega\Big(\Phi^N_t[X]\Big)=\omega(X)\,\Leftrightarrow\,\omega\Big(\mathbb{L}_N[X]\Big)=0\ ,
\end{equation} 
for all $X$ in local algebras $\mathcal{A}_{[-N,N]}$.

\begin{remark}
\label{rem2}
The local dynamical maps $\Phi^N_t$ satisfy the forward in time composition law, typical of irreversible time-evolutions:
\begin{equation}
\label{semigroup}
\Phi^N_t\circ\Phi^N_s=\Phi^N_s\circ\Phi^N_t=\Phi^N_{t+s}\qquad\forall s,t\geq 0\ .
\end{equation}
The request of microscopic time-invariance of $\omega$ is essential to get a mesoscopic fluctuation dynamics with the semigroup property.
For $\omega$ not satisfying \eqref{invstate}, the dissipative fluctuation dynamics 
would result explicitly time-dependent and non-Markovian \cite{BCFN}.
\end{remark}

The consequences of a time-invariant $\omega$ can be appreciated by  considering the expectation of the action of the Lindblad generator on a fluctuation operator. Recalling
\eqref{prop1d} in Proposition \ref{action}, one gets
$$
\mathbb{L}_N[(r,F_N)]=\Big(r,(\mathcal{H}+\mathcal{D})F_N\Big)\,+\,\sqrt{N_T}\,\Big(r,(\mathcal{H}+\mathcal{D})\,x_\omega\Big)\ ,
$$
where the last quantity is a scalar multiple of the identity operator.
Therefore, since fluctuation operators have vanishing mean values, $\omega(F_N(x_i))=0$,
time-invariance of $\omega$ yields
\begin{eqnarray}
\label{timeinv1}
\omega\Big(\mathbb{L}_N[(r,F_N)]\Big)&=&0\,\Longrightarrow\,\sqrt{N_T}\,\Big(r,(\mathcal{H}+\mathcal{D})\,x_\omega\Big)=0\qquad\hbox{so that}\\
\label{timeinv2}
\mathbb{L}_N[(r,F_N)]&=&\Big(r,(\mathcal{H}+\mathcal{D})F_N\Big)\quad\hbox{and}\quad
\Phi^N_t[(r,F_N)]=\Big(r,{\rm e}^{t(\mathcal{H}+\mathcal{D})}\,F_N\Big)\ . 
\end{eqnarray}

Furthermore, consider the quantity $S(r;N)$ in \eqref{prop1b}; from \eqref{invstate} it follows that
$$
\omega(S(r;N))=\frac{1}{2}\omega\Big(\mathbb{L}[(r,F_N)]\,(r,F_N)\Big)\,+\,
\frac{1}{2}\omega\Big((r,F_N)\,\mathbb{L}[(r,F_N)]\Big)\ .
$$
Using the fact that local quantum fluctuations have vanishing mean values with respect to $\omega$ (see \eqref{lqf0}) and by means of \eqref{prop1d}, one gets
\begin{eqnarray*}
\omega\Big(\mathbb{L}[(r,F_N)]\,(r,F_N)\Big)&=&\sum_{i,j,k=1}^dr_ir_j\,
\Big(\mathcal{H}_{ik}+\mathcal{D}_{ik}\Big)\,\omega\Big(F_N(x_k)F_N(x_j)\Big)\\
\omega\Big((r,F_N)\,\mathbb{L}[(r,F_N)]\Big)&=&\sum_{i,j,k=1}^dr_ir_j\,
\Big(\mathcal{H}_{ik}+\mathcal{D}_{ik}\Big)\,\omega\Big(F_N(x_j)F_N(x_k)\Big)\\
\end{eqnarray*}
Then, in the limit $N\to+\infty$, the fluctuation relation \eqref{qfa1aa} yields
\begin{equation}
\label{MFF1}
\lim_{N\to+\infty}\omega(S(r;N))=\Big(r,(\mathcal{H}+\mathcal{D})\,\Sigma_\omega\,r\Big)\ ,
\end{equation}
where $\Sigma_\omega$ is the fluctuation covariance matrix.

The proof of Proposition \ref{action} shows that $S(r;N)$ is the only operator involving tensor products of operators from more than one site produced by the action of a Lindblad generator with the property \eqref{assump0} on local fluctuations. 
It is the sum of terms of the form 
\begin{equation}
\label{nonlocal}
R_N=\frac{1}{N_T}\sum_{k,\ell=-N}^N\,J_{k\ell}\,\,a^{(k)}\,b^{(\ell)}\ ,
\end{equation}
with $a$ and $b$ suitable single-site operators. Notice that $\lim_{N\to+\infty}\omega(R_N)$ always exists because of the assumption \eqref{assum0}.

Despite the fact that $S(r;N)$ does not in general scale as a mean-field observable, we already know that the time-invariance of $\omega$ implies the convergence of its average in the large $N$ limit.
The following lemma, whose proof is reported in Appendix D, exhibits further properties of this kind of observables.

\begin{lemma}
\label{lemma4}
Given a set of coefficients $J_{k\ell}$ such that
$$
J_{k\ell}\equiv J(k-\ell)=J^*_{\ell k}\ ,\qquad \sum_{\ell=-\infty}^{\infty}\left|J_{k\ell}\right|=\sum_{p=-\infty}^{\infty}\left|J(p)\right|<\infty\ ,
$$ 
and a translation invariant clustering state $\omega$, then 
$$
\lim_{N\to+\infty}\left\|\left[W_N(r)\,,\,R_N\right]\right\|=0\quad \hbox{and}\quad
\lim_{N\to+\infty}\omega\left(\left(R_N-R\right)^\dagger \left(R_N-R\right)\right)=0
$$
for all local Weyl-like operators as in \eqref{localWeyl} and $R_N$ as in \eqref{nonlocal} with
$\displaystyle R:=\lim_{N\to+\infty}\omega(R_N)$.
\label{mean2}
\end{lemma}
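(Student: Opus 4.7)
My plan is to treat the two statements separately: the norm-vanishing commutator reduces to a Duhamel estimate plus the summability of $J$, while the $L^2(\omega)$-convergence of $R_N$ to its expectation reduces to standard mean-field convergence for clustering states, combined with dominated convergence to handle the sum over lags.

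For the first statement, the key elementary bound is $\|[e^{iH},A]\|\le\|[H,A]\|$ for self-adjoint $H$ (derivative of $s\mapsto e^{-isH}Ae^{isH}$ and integration). Applied with $H=(r,F_N)$, and noting that $(r,F_N)=\tfrac{1}{\sqrt{N_T}}\sum_{j}q_r^{(j)}$ while operators at different sites commute, this yields
\[
\bigl\|[W_N(r),a^{(k)}]\bigr\|\le\bigl\|[(r,F_N),a^{(k)}]\bigr\|=\tfrac{1}{\sqrt{N_T}}\bigl\|[q_r^{(k)},a^{(k)}]\bigr\|=O(1/\sqrt{N_T}),
\]
and similarly for $b^{(\ell)}$. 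Using the Leibniz rule $[W_N(r),a^{(k)}b^{(\ell)}]=[W_N(r),a^{(k)}]b^{(\ell)}+a^{(k)}[W_N(r),b^{(\ell)}]$ together with the uniform bound $\sum_{\ell=-N}^{N}|J_{k\ell}|\le\sum_{p\in\mathbb Z}|J(p)|<\infty$ from \eqref{assum0}, the triangle inequality gives $\|[W_N(r),R_N]\|\le C/\sqrt{N_T}\to 0$.

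For the second statement, I would rewrite, by the change of variable $p=k-\ell$,
\[
R_N=\sum_{p\in\mathbb Z}J(p)\,T_N^{(p)},\qquad T_N^{(p)}:=\frac{1}{N_T}\sum_{\ell:\,\ell,\ell+p\in[-N,N]}a^{(\ell+p)}\,b^{(\ell)},
\]
so each $T_N^{(p)}$ is a mean-field average of translates of the (at most) two-site operator $c_p:=a^{(p)}b^{(0)}$. Expanding $\omega\bigl((T_N^{(p)})^\dagger T_N^{(p)}\bigr)$, translation invariance collapses the double sum to a one-index Cesaro average $\tfrac{1}{N_T}\sum_m(1-|m|/N_T)\,\omega(c_p^\dagger\tau^m(c_p))$, and clustering of $\omega$ gives $\omega(c_p^\dagger\tau^m(c_p))\to|t_p|^2$ as $|m|\to\infty$, with $t_p:=\omega(c_p)$. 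A standard Cesaro argument then yields, for each fixed $p$, the mean-field convergence $\omega\bigl((T_N^{(p)}-t_p)^\dagger(T_N^{(p)}-t_p)\bigr)\to 0$. Since the bilinear form is positive and $R=\sum_p J(p)\,t_p$ by the same Cesaro/clustering computation applied to $\omega(R_N)$, Cauchy--Schwarz gives
\[
\omega\bigl((R_N-R)^\dagger(R_N-R)\bigr)\le\Bigl(\sum_{p}|J(p)|\,\sqrt{\omega\bigl(|T_N^{(p)}-t_p|^2\bigr)}\Bigr)^{\!2}.
\]

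The main technical obstacle is not any one of these estimates individually but the joint control needed to pass to the limit inside the $p$-sum, since $T_N^{(p)}$ is a mean-field average over only $N_T-|p|$ sites and the summands degrade when $|p|$ is comparable to $N$. I would handle this by the routine splitting $|p|\le\sqrt{N_T}$ and $|p|>\sqrt{N_T}$: on the first region each summand tends to zero pointwise, is uniformly bounded by $|J(p)|\,(\|a\|\|b\|+|t_p|)$, and the dominated convergence theorem applies thanks to $\sum_p|J(p)|<\infty$; on the second region the uniform bound $\sqrt{\omega(|T_N^{(p)}-t_p|^2)}\le 2\|a\|\|b\|$ combines with $\sum_{|p|>\sqrt{N_T}}|J(p)|\to 0$ to make the tail disappear. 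Squaring the resulting vanishing upper bound completes the proof.
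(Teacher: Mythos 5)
Your argument is correct. The first half is essentially the paper's own proof: both reduce $\|[W_N(r),R_N]\|$ to a commutator with the self-adjoint generator $(r,F_N)$ via the Duhamel identity and then conclude from $N_T^{-3/2}\sum_{k,\ell=-N}^{N}|J_{k\ell}|\le N_T^{-1/2}\sum_{p}|J(p)|\to0$; whether one applies Duhamel before or after the Leibniz rule on $a^{(k)}b^{(\ell)}$ is immaterial. For the second half the underlying analysis is the same (translation invariance to pass to lag variables, clustering combined with a Ces\`aro argument), but your organization genuinely differs: the paper expands $\omega(R_N^\dagger R_N)-\omega(R_N^\dagger)\omega(R_N)$ as a single quadruple sum, changes variables to $j=k-\ell$, $p=n-m$, $h=\ell-m$, and controls the whole expression at once using uniform convergence of the $j,p$ sums together with the Ces\`aro lemma in $h$; you instead factor $R_N=\sum_pJ(p)\,T_N^{(p)}$ into per-lag mean-field averages, prove the $L^2(\omega)$ convergence of each $T_N^{(p)}$ to the scalar $t_p$ separately, and then resum using the triangle inequality in the GNS seminorm and dominated convergence in $p$. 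Your version is more modular --- it isolates the standard fact that mean-field averages of local observables converge to scalars in clustering states and makes the role of the summability of $J$ transparent --- at the cost of the extra (harmless) steps of verifying $R=\sum_pJ(p)t_p$ and of controlling the lags $|p|$ comparable to $N$, which your splitting at $|p|=\sqrt{N_T}$ (or, more simply, dominated convergence over $\mathbb{Z}$, since for each fixed $p$ the summand vanishes and is dominated by $2\|a\|\|b\|\,|J(p)|$) disposes of.
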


\section{Mesoscopic Dissipative Dynamics}

In this section we shall show that, under the locality condition \eqref{assump0} on the microscopic Lindblad generator, 
the mesoscopic dynamics that emerges in the limit $N\to+\infty$ is described by  a
semi-group $\{\Phi_t\}_{t\geq0}$ of completely positive, unital maps 
on the quantum fluctuation algebra.

We shall prove that the microscopic dissipative dynamical maps $\Phi_t^N$ on the local algebras $\mathcal{A}_{[-N,N]}$ define mesoscopic dynamical maps $\Phi_t$ on the Weyl algebra $\mathcal{W}(\chi,\sigma_\omega)$ of quantum fluctuations through the limits
\begin{equation}
\label{RD2}
\lim_{N\to+\infty}\omega\left(W_N(a)\Phi^N_t\left[W_N(r)\right]\,
W_N(b)\right)=\Omega\Big(W(a)\,\Phi_t\left[W(r)\right]\,W(b)\Big)\ ,
\end{equation}
for all Weyl-like operators $W_N(a)$, $W_N(b)$, $W_N(r)$, with $W(a)$, 
$W(b)$ and $W(r)$ the corresponding limiting Weyl operators, with  $\omega$ and $\Omega$ the microscopic state on the quantum chain, respectively the mesoscopic state on the Weyl algebra defined by \eqref{qfa1a}. 

\begin{remark}
\label{remfinal}
The above type of convergence reduces to \eqref{rem0ref} in Remark \ref{rem0} for $t=0$. Moreover, it conforms to the fact that the action of any map on the quantum fluctuation algebra is specified by its action on the Weyl operators 
$W(r)$. As the dynamical maps $\Phi_t$ we are looking for have to transform the Weyl algebra into itself, their action is in turn completely defined by correlation functions of the type in \eqref{RD2}.
Furthermore, both the state $\Omega$ and the Weyl operators $W(r)$ arise from the large $N$ limit of the microscopic local Weyl-like operators $W_N(r)$ with respect to the microscopic state $\omega$.
\end{remark}

We will look for dynamical maps $\Phi_t$ that be quasi-free, namely  that map Weyl operators into Weyl operators:
\begin{equation}
\label{qfm0}
\Phi_t[W(r)]={\rm e}^{f_r(t)}\,W(r_t)\qquad\forall\, r\in\mathbb{R}^d\ ,
\end{equation}
where both the time-dependent function $f_r(t)$ and vector $r_t\in\mathbb{R}^d$  
are unknowns to be determined.
The maps are unital, $\Phi_t[1]=1$, and must be completely positive. As such they must obey the Schwartz positivity inequality \cite{Alicki}
\begin{equation}
\label{Schwpos}
\Phi_t\left[X^\dag X\right]\,\geq\,\Phi_t\left[X^\dag\right]\,\Phi_t\left[X\right]\qquad\forall X\in \mathcal{W}(\chi,\sigma_\omega)\ .
\end{equation}
Then, since the Weyl operators $W(r)$ are unitary, $f_r(t)$ must satisfy
\begin{equation}
\label{Schwpos1}
\|\Phi_t\left[W(r)\right]\|=\left|{\rm e}^{f_r(t)}\right|\leq \|\Phi_t[1]\|=1\ .
\end{equation}

The proof of equation \eqref{RD2} will be based on a family of local microscopic maps 
$\Psi_t$ on the microscopic quantum chain that interpolate between the microscopic, $\Phi^N_t$, and the mesoscopic dissipative time-evolution, $\Phi_t$. They are defined by:
\begin{eqnarray}
\label{intmap0}
W_N(r)\mapsto\Psi_t\left[W_N(r)\right]&=&{\rm e}^{f_r(t)}\,W_N(r_t)\,=\,
{\rm e}^{f_r(t)}\,{\rm e}^{i(r_t,F_N)}\\
\label{intmap1}
r_t&=&\mathcal{X}^{tr}_t\,r\ ,\quad \mathcal{X}_t={\rm e}^{t\,\left(\mathcal{D}+\mathcal{H}\right)}\\
\label{intmap2}
f_r(t)&=&-\left(r,\mathcal{Y}_t r\right)\ ,\quad \mathcal{Y}_t=\frac{1}{2}\left(\Sigma_\omega\,-\,\mathcal{X}_t\,\Sigma_\omega\, \mathcal{X}_t^{tr}\right)\ ,
\end{eqnarray}
where $\mathcal{X}^{tr}$ denotes the transposition of the $d\times d$ matrix $\mathcal{X}$ and $\Sigma_\omega$ is the fluctuation covariance matrix.
Because of the convergence property discussed in Remark \ref{rem0}, we know that
$$
\lim_{N\to+\infty}\omega\left(W_N(a)\,W_N(r_t)\,W_N(b)\right)=\Omega\Big(W(a)\,W(r_t)\,W(b)\Big)\ .
$$
We are going to show that the mesoscopic dynamical maps $\Phi_t$ in \eqref{RD2}
are such that
\begin{equation}
\label{final}
\Phi_t\left[W(r)\right]={\rm e}^{f_r(t)}\,W(r_t)\qquad\forall\ W(r)\in\mathcal{W}(\chi,\sigma_\omega)\ , 
\end{equation}
where $f_r(t)$ and $r_t$ are given by \eqref{intmap1} and \eqref{intmap2}.

The maps $\Phi_t$ compose as a semigroup; indeed, for all $s,t\geq 0$,
\begin{eqnarray*}
\Phi_s\circ\Phi_t\left[W(r)\right]&=&{\rm e}^{-(r,\mathcal{Y}_t\,r)-(r_t,\mathcal{Y}_s\,r_t)}\,W((r_t)_s)\\
&=&{\rm e}^{-(r,\mathcal{Y}_t\,r)-(r,\mathcal{X}_t\mathcal{Y}_s\mathcal{X}_t^{tr}\,r)}\,W(r_{t+s})\\
&=&{\rm e}^{-(r,\mathcal{Y}_{t+s}\,r)}\,W(r_{t+s})=\Phi_{t+s}\left[W(r)\right]\ .
\end{eqnarray*}
Furthermore, as required by complete positivity and unitality, and proved by the following lemma, the function
$f_r(t)$ defined by \eqref{intmap2} is such that $\exp(f_r(t))\leq 1$.

\begin{lemma}
\label{lemma6}
The invariance of the microscopic state $\omega$  with respect to the microscopic dissipative dynamics $\Phi^N_t$ implies that $\mathcal{Y}_t$ is negative semi-definite so that $f_r(t)\leq 0$.
\end{lemma}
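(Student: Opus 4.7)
The plan is to derive the claim from the Schwarz positivity inequality satisfied by the completely positive microscopic maps $\Phi^N_t$, combined with the invariance $\omega\circ\Phi^N_t=\omega$. The target inequality $f_r(t)\le 0$ is equivalent to $(r,\mathcal{Y}_t r)\ge 0$, i.e.\ to the covariance bound
\begin{equation*}
(r,\Sigma_\omega r)\ge (r,\mathcal{X}_t\Sigma_\omega\mathcal{X}_t^{tr}r)=(r_t,\Sigma_\omega r_t),
\end{equation*}
which I will establish by passing to the limit $N\to\infty$ in a finite-$N$ counterpart.

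First I would use the invariance of $\omega$ to replace the action of $\mathbb{L}_N$ on fluctuations by its "homogeneous" part: by \eqref{timeinv1}--\eqref{timeinv2}, $\mathbb{L}_N[(r,F_N)]=(r,(\mathcal{H}+\mathcal{D})F_N)$, and consequently
\begin{equation*}
\Phi^N_t\bigl[(r,F_N)\bigr]=\bigl(r,{\rm e}^{t(\mathcal{H}+\mathcal{D})}F_N\bigr)=(r_t,F_N),
\end{equation*}
where the last equality uses the definition $r_t=\mathcal{X}_t^{tr}r$ together with the linearity of $(\,\cdot\,,F_N)$. Since each $x_i$ is hermitian, so is $(r,F_N)$, hence also $(r_t,F_N)$.

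Next I would apply the Schwarz inequality \eqref{Schwpos} (valid for CP unital $\Phi^N_t$) with the hermitian element $X=(r,F_N)$ and then take the expectation in the invariant state $\omega$. Using $\omega\circ\Phi^N_t=\omega$ on the left-hand side and the identification of the right-hand side above,
\begin{equation*}
\omega\bigl((r,F_N)^2\bigr)=\omega\Bigl(\Phi^N_t\bigl[(r,F_N)^2\bigr]\Bigr)\ge \omega\Bigl(\Phi^N_t[(r,F_N)]^2\Bigr)=\omega\bigl((r_t,F_N)^2\bigr).
\end{equation*}
Because $r_ir_j$ is symmetric in $i,j$, $\omega((r,F_N)^2)$ coincides with the symmetrized two-point function $\sum_{ij}r_ir_j\,\tfrac12\omega(\{F_N(x_i),F_N(x_j)\})$, and likewise for $r_t$. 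By definition \eqref{qfa1aa} of the covariance matrix, both sides have limits and I obtain $(r,\Sigma_\omega r)\ge (r_t,\Sigma_\omega r_t)$, which is precisely $(r,\mathcal{Y}_t r)\ge 0$ for every $r\in\mathbb{R}^d$ and every $t\ge 0$. Thus $\mathcal{Y}_t$ is positive semi-definite, and $f_r(t)=-(r,\mathcal{Y}_t r)\le 0$, as required for the Schwarz bound $|{\rm e}^{f_r(t)}|\le 1$ that complete positivity and unitality of $\Phi_t$ demand.

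I do not see any substantive obstacle here: the argument is essentially a transplant of the standard "Schwarz inequality plus stationarity implies monotonicity of the variance" identity from quantum dynamical semigroups to the fluctuation setting. The only mildly delicate point is that $\Sigma_\omega$ in \eqref{qfa1aa} is defined as the limit of the \emph{symmetrized} two-point function, but this matches $\omega((r,F_N)^2)$ after summing over $i,j$ with symmetric weights, so no anticommutator has to be inserted by hand. Note also that the lemma does not rely on the locality condition \eqref{assump0} beyond what is needed to make the equality $\Phi^N_t[(r,F_N)]=(r_t,F_N)$ meaningful, and hence applies at whatever level of generality the fluctuation algebra is already defined.
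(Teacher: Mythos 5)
Your proof is correct and follows essentially the same route as the paper: Schwarz positivity of the completely positive unital maps $\Phi^N_t$ applied to fluctuation operators, combined with $\omega\circ\Phi^N_t=\omega$ and the relation $\Phi^N_t[(r,F_N)]=(r_t,F_N)$, then passage to the large-$N$ limit via \eqref{qfa1aa}; the paper merely tests with complex vectors $\lambda$ and writes the anticommutator explicitly, whereas you exploit the automatic symmetry of $r_ir_j$, which suffices since $\mathcal{Y}_t$ is real symmetric. Note that what both arguments actually establish is $\mathcal{Y}_t\geq 0$ (positive semi-definite), which is exactly what $f_r(t)=-(r,\mathcal{Y}_t r)\leq 0$ requires, so your reading of the target inequality is the right one and the word ``negative'' in the lemma statement is a slip.
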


\begin{proof}
We shall show that $\omega\circ\Phi^N_t=\omega$, $t\geq 0$, makes negative semi-definite, $\mathcal{Y}_t\leq 0$, the matrix defined by \eqref{intmap2}, for all $t\geq 0$.
Let $\lambda\in\mathbb{C}^d$ be a generic complex vector and set $q_\lambda=\sum_{j=1}^d
\lambda_j\,x_j$; then, using Schwartz positivity, time-invariance and the second relation in \eqref{timeinv2}, one estimates 
\begin{eqnarray*}
&&
\frac{1}{2}\sum_{i,j=1}^d\lambda_i^*\lambda_j\,\omega\Big(\Big\{F_N(x_i)\,,\,F_N(x_j)
\Big\}\Big)=\frac{1}{2}\sum_{i,j=1}^d\lambda_i^*\lambda_j\,\omega\Big(\Phi^N_t\Big[\Big\{F_N(x_i)\,,\,F_N(x_j)\Big\}\Big]\Big)\\
&&\hskip 2cm
\geq\frac{1}{2}\omega\Big(\Phi^N_t\left[F_N(q^\dag_\lambda)\right]\,
\Phi_t^N\left[F_N(q_\lambda)\right]\Big)\,+\,\frac{1}{2}\omega\Big(\Phi^N_t\left[F_N(q_\lambda)\right]\,\Phi^N_t[F_N(q^\dag_\lambda)]\Big)\\ 
&&\hskip 2cm
=\frac{1}{2}\omega\Big(\Big(\lambda,\mathcal{X}_t\,F_N\Big)\,\Big(\lambda^*,\mathcal{X}_t\,F_N\Big)\Big)\,+\,\frac{1}{2}\omega\Big(\Big(\lambda^*,\mathcal{X}_t\,F_N\Big)\,\Big(\lambda,\mathcal{X}_t\,F_N\Big)\Big)\\
&&\hskip2 cm
=\frac{1}{2}\sum_{i,j;r,s=1}^d\lambda_i^*\lambda_r\mathcal{X}_t^{ij}\,\mathcal{X}_t^{rs}\omega\Big(\Big\{F_N(x_j)\,,\,F_N(x_s)\Big\}\Big)\ .
\end{eqnarray*}
In the large $N$ limit one thus obtain, for all $\lambda\in\mathbb{C}^d$,
$$
\Big(\lambda,\Sigma_\omega\,\lambda\Big)\geq \sum_{i,j;r,s=1}^d\lambda_i^*\lambda_r\mathcal{X}_t^{ij}\,\mathcal{X}_t^{rs}\,\Sigma^{js}_\omega=
\Big(\lambda,\mathcal{X}_t\Sigma_\omega\mathcal{X}^{tr}_t\,\lambda\Big)\ .
$$
\end{proof}

\begin{remark}
\label{rem3}
As the dynamical maps $\Phi_t$ tramsform Weyl operators into Weyl operators, their dual maps acting on generic
states $\Omega$ on the Weyl algebra sending them into $\Omega_t=\Omega\circ\Phi_t$, transform  Gaussian states into Gaussian states. 
For instance, as it emerges from a microscopic time invariant state $\omega$, the state $\Omega$ in \eqref{qfa1a} and \eqref{qfa1b} is left invariant by $\Phi_t$:
\begin{eqnarray*}
\Omega\left(\Phi_t\left[W(r)\right]\right)&=&{\rm e}^{f_r(t)}\,\Omega\left(W(r_t)\right)=
{\rm e}^{-\frac{1}{2}(r,\mathcal{Y}_tr)-\frac{1}{2}\,(r_t,\Sigma_\omega\,r_t)}\\ 
&=&
{\rm e}^{-\frac{1}{2}(r,\mathcal{Y}_t r)-\frac{1}{2}\,(r,\mathcal{X}_t\Sigma_\omega\,\mathcal{X}_t^{tr}r)}={\rm e}^{-\frac{1}{2}\,(r,\Sigma_\omega\,r)}=\Omega\left(W(r)\right)\ .
\end{eqnarray*}
As they inherit the semigroup property from the $\Phi_t$, the dual maps have a generator and this generator must then be at most quadratic in the mesoscopic operators $F(x_i)$ arising from the local fluctuation operators $F_N(x_i)$. The explicit form of the generator is derived by duality and by explicitly computing the time-derivative of $\Phi_t\left[W(r)\right]$, using the Weyl algebraic relations to reconstruct it by means of the action $\mathbb{L}\left[W(r)\right]$: it turns out that the resulting Kossakowski matrix 
is positive semi-definite so that the maps $\Phi_t$ on $\mathcal{W}(\chi,\sigma_\omega)$ are completely positive.
\end{remark}

In conclusion, in order to prove \eqref{final}, we need show that
$$
\lim_{N\to+\infty}\omega\left(W_N(a)\Big(\Psi_t\left[W_N(r)\right]\,-\,\Phi^N_t\left[W_N(r)\right]\Big)W_N(b)\right)=0\ .
$$
Actually, like all positive, normalised linear functionals on the Weyl algebra, $\omega$ satisfies the Cauchy-Schwartz inequality 
$|\omega(a^\dag\,b)|^2\leq\omega(a^\dag a)\,\omega(b^\dag b)$, whence the unitarity of the Weyl-like operators $W_N(r)$ yields 
\begin{eqnarray}
\label{CSIN1}
\Big|\omega\left(W_N(a)\,\Delta_N(t,r)\,W_N(b)\right)\Big|^2&\leq& \omega\Big(W_N(a)\Delta_N(t,r)\Delta^\dag_N(t,r)\,W^\dag_N(a)\Big)\\
\label{CSIN2}
\Delta_N(t,r)&=&\Psi_t[W_N(r)]\,-\,\Phi^N_t\left[W_N(r)\right]\ .
\end{eqnarray} 

In order to show that the right hand side of the above inequality vanishes with $N\to+\infty$, we relate the interpolating map $\Psi_t$ to the local microscopic dissipative dynamics $\Phi^N_t={\rm e}^{t\,\mathbb{L}_N}$; namely, we study 
the time-derivative of $\Psi_t$ and its relations with the generator $\mathbb{L}_N$. 
The structure of the time derivative can be derived by means of the following lemma whose proof can be found in Appendix E.

\begin{lemma}
\label{applemma}
Let $M_t$ be a time-dependent Hermitean matrix and $\displaystyle N_t={\rm e}^{iM_t}$. 
Then,
\begin{equation}
\label{appb1}
\dot{N_t}:=\frac{{\rm d}N_t}{{\rm d}t}=O_t\,N_t\ ,\quad O_t:=\sum_{k=1}^\infty\frac{i^k}{k!}\mathbb{K}_{M_t}^{k-1}[\dot{M}_t]\ ,
\end{equation}
where 
$\mathbb{K}_{M_t}[\dot{M}_t]=\Big[M_t\,,\,\mathbb{K}_{M_t}^{n-1}[\dot{M}_t]\Big]$ and $\mathbb{K}^0_{M_t}[\dot{M}_t]=\dot{M}_t$.
\end{lemma}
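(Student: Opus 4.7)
The plan is to differentiate $N_t = e^{iM_t}$ via the Duhamel formula for time-dependent exponentials, then expand the resulting adjoint action $e^{isM_t}\dot{M}_t e^{-isM_t}$ as a nested-commutator series via Hadamard's lemma, and integrate term by term.

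First, I would invoke the standard Duhamel identity
\[
\dot{N}_t = i\int_0^1 e^{isM_t}\,\dot{M}_t\,e^{i(1-s)M_t}\,ds,
\]
which can be proved either by differentiating in $t$ the finite product approximation $(\mathbbm{1} + iM_t/n)^n$ and passing to the limit, or from the Fr\'echet derivative of the matrix exponential, $e^{A+\epsilon B} = e^A + \epsilon\int_0^1 e^{sA}B\,e^{(1-s)A}\,ds + O(\epsilon^2)$, applied with $A = iM_t$ and $\epsilon B = i\,\dot{M}_t\,dt$.

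Second, I would factor $e^{i(1-s)M_t} = e^{-isM_t}\,e^{iM_t}$ to pull $N_t$ out on the right, obtaining
\[
\dot{N}_t = O_t\,N_t, \qquad O_t = i\int_0^1 e^{isM_t}\,\dot{M}_t\,e^{-isM_t}\,ds.
\]
The integrand is the adjoint action of $e^{isM_t}$ on $\dot{M}_t$; Hadamard's lemma, proved by noting that $f(s) := e^{isM_t}\dot{M}_t e^{-isM_t}$ satisfies the linear operator ODE $f'(s) = i[M_t, f(s)] = i\,\mathbb{K}_{M_t}[f(s)]$ with initial condition $f(0) = \dot{M}_t$, then gives
\[
e^{isM_t}\,\dot{M}_t\,e^{-isM_t} = \sum_{k=0}^\infty \frac{(is)^k}{k!}\,\mathbb{K}_{M_t}^k[\dot{M}_t].
\]

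Finally, exchanging sum and integral, using $\int_0^1 s^k\,ds = 1/(k+1)$, and reindexing $j = k+1$ yields
\[
O_t = \sum_{k=0}^\infty \frac{i^{k+1}}{(k+1)!}\,\mathbb{K}_{M_t}^k[\dot{M}_t] = \sum_{j=1}^\infty \frac{i^j}{j!}\,\mathbb{K}_{M_t}^{j-1}[\dot{M}_t],
\]
which is precisely the stated formula. The interchange of summation and integration is legitimate because of the elementary operator-norm bound $\|\mathbb{K}_{M_t}^k[\dot{M}_t]\| \leq (2\|M_t\|)^k\|\dot{M}_t\|$, which dominates the partial sums uniformly on $s \in [0,1]$ by a convergent exponential series. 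I do not anticipate any serious obstacle: the lemma is a purely algebraic/analytic identity for a bounded Hermitian generator and its time-derivative, and the only cosmetic subtlety is recognizing that the order of exponentials in the Duhamel formula, after the substitution $e^{i(1-s)M_t} = e^{-isM_t}N_t$, naturally yields $O_t$ as a left-multiplier on $N_t$, exactly as stated in the lemma.
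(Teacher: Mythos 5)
Your proof is correct, but it takes a genuinely different route from the one in the paper. You start from the Duhamel integral representation $\dot{N}_t = i\int_0^1 e^{isM_t}\,\dot{M}_t\,e^{i(1-s)M_t}\,ds$, pull $N_t$ out on the right using that the exponents commute, expand the adjoint action by Hadamard's lemma, and integrate term by term; the bound $\|\mathbb{K}_{M_t}^k[\dot{M}_t]\|\leq (2\|M_t\|)^k\|\dot{M}_t\|$ indeed justifies the interchange, and the reindexing reproduces $O_t$ exactly. The paper instead avoids the integral representation altogether: it uses Hadamard's lemma together with $[N_t,M_t]=0$ and unitarity to derive the commutator identity $\bigl[M_t\,,\,O_t\,N_t\bigr]=\bigl[M_t\,,\,\dot{N}_t\bigr]$, and then must resolve the residual ambiguity (since $\mathrm{ad}_{M_t}$ has a nontrivial kernel) by passing to the spectral decomposition of $M_t$ and checking matrix elements separately between eigenvectors with distinct eigenvalues and within degenerate eigenspaces, the latter requiring the orthonormality relation $\langle\dot{m}_a(t)\vert m_b(t)\rangle+\langle m_a(t)\vert\dot{m}_b(t)\rangle=0$. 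Your Duhamel route buys a cleaner, case-free argument that sidesteps the degenerate-eigenvalue analysis entirely, at the modest cost of invoking (and, as you note, having to justify) the integral formula for the derivative of a time-dependent matrix exponential; the paper's route stays purely algebraic up to the final spectral step. Both are complete and valid.
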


Equipped with this result, we can show that, for large $N$, all terms in the series expansion of $\displaystyle\frac{{\rm d}}{{\rm d}t}\Psi_t[W_N(r)]$ of order larger than $2$ vanish in norm.

\begin{proposition}
\label{prop2}
For large $N$, the behaviour of $\displaystyle\frac{{\rm d}}{{\rm d}t}\Psi_t[W_N(r)]$
can be approximated by
\begin{eqnarray}
\nonumber
\frac{{\rm d}}{{\rm d}t}\Psi_t[W_N(r)]&\simeq&\Bigg(i\left(r_t,(\mathcal{H}\,+\,\mathcal{D})F_N\right)\,-\,\frac{1}{2}\left[(r_t,F_N)\,,\,(r_t,(\mathcal{H}\,+\,\mathcal{D})F_N)\right]\,+\\
&+&\left(r_t,\left(\mathcal{H}\,+\,\mathcal{D}\right)\Sigma_\omega\,r_t\right)\Bigg)\,\Psi_t\left[W_N(r)\right]\ ,
\end{eqnarray}
the error vanishing in norm.
\end{proposition}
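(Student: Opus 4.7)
The plan is to apply Lemma \ref{applemma} to $N_t=W_N(r_t)=\mathrm{e}^{iM_t}$ with the self-adjoint element $M_t:=(r_t,F_N)$ of the local algebra $\mathcal{A}_{[-N,N]}$, and to treat the scalar prefactor $\mathrm{e}^{f_r(t)}$ via the Leibniz rule. This yields
\[
\frac{\mathrm{d}}{\mathrm{d}t}\Psi_t\bigl[W_N(r)\bigr]\,=\,\Bigl(\dot f_r(t)\,+\,\sum_{k=1}^{\infty}\frac{i^k}{k!}\,\mathbb{K}^{k-1}_{M_t}[\dot M_t]\Bigr)\,\Psi_t\bigl[W_N(r)\bigr],
\]
so that the task splits into (i) identifying the $k=1$ and $k=2$ contributions of the series, (ii) computing $\dot f_r(t)$ explicitly, and (iii) controlling the tail $k\ge 3$ in operator norm.

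For (i), since $\mathcal{X}_t=\mathrm{e}^{t(\mathcal{H}+\mathcal{D})}$ commutes with $\mathcal{H}+\mathcal{D}$, differentiating $r_t=\mathcal{X}_t^{tr}r$ gives $\dot r_t=(\mathcal{H}+\mathcal{D})^{tr}r_t$, hence $\dot M_t=(r_t,(\mathcal{H}+\mathcal{D})F_N)$. The $k=1$ term is therefore $i(r_t,(\mathcal{H}+\mathcal{D})F_N)$ and the $k=2$ term is $-\tfrac12[(r_t,F_N),(r_t,(\mathcal{H}+\mathcal{D})F_N)]$, which matches the first two pieces of the claimed expression. For (ii), differentiating $\mathcal{Y}_t=\tfrac12(\Sigma_\omega-\mathcal{X}_t\Sigma_\omega\mathcal{X}_t^{tr})$ and then pushing the $\mathcal{X}_t$ factors through the scalar product by means of $\mathcal{X}_t^{tr}r=r_t$, $[\mathcal{X}_t,\mathcal{H}+\mathcal{D}]=0$, $\Sigma_\omega^{tr}=\Sigma_\omega$, and the symmetry of the real inner product, both cross terms reduce to $\tfrac12(r_t,(\mathcal{H}+\mathcal{D})\Sigma_\omega r_t)$. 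Summing yields $\dot f_r(t)=-(r,\dot{\mathcal{Y}}_tr)=(r_t,(\mathcal{H}+\mathcal{D})\Sigma_\omega r_t)$, the third piece of the claim.

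The main (though still routine) technical point is (iii): showing that the tail vanishes in operator norm. I will exploit commutativity of single-site operators at distinct sites. Writing $M_t=N_T^{-1/2}\sum_{j=-N}^{N}q^{(j)}$ and $\dot M_t=N_T^{-1/2}\sum_{j=-N}^{N}\tilde q^{(j)}$ with $q,\tilde q\in M_p(\mathbb{C})$ (centered single-site operators), repeated use of $[u^{(j)},v^{(l)}]=\delta_{jl}[u,v]^{(j)}$ gives inductively
\[
\mathbb{K}^{k-1}_{M_t}[\dot M_t]\,=\,\frac{1}{N_T^{k/2}}\sum_{j=-N}^{N}\bigl(\mathrm{ad}_{q}^{\,k-1}\tilde q\bigr)^{(j)},
\]
so that $\|\mathbb{K}^{k-1}_{M_t}[\dot M_t]\|\le (2\|q\|)^{k-1}\|\tilde q\|\,N_T^{-(k-2)/2}$. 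For $k\ge 3$ the denominator carries a strictly positive power of $N_T$, while the factorials guarantee absolute convergence of $\sum_{k\ge 3}(2\|q\|)^{k-1}/k!$; consequently the tail $\sum_{k\ge 3}\tfrac{i^k}{k!}\mathbb{K}^{k-1}_{M_t}[\dot M_t]$ tends to zero in norm as $N\to\infty$. Since $\|\Psi_t[W_N(r)]\|=\mathrm{e}^{f_r(t)}\le 1$ by Lemma \ref{lemma6}, multiplication by $\Psi_t[W_N(r)]$ preserves this bound, which is exactly the asserted norm-vanishing error.
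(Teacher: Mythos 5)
Your proposal is correct and follows essentially the same route as the paper's proof: Leibniz rule for the scalar prefactor, Lemma \ref{applemma} applied to $W_N(r_t)=\mathrm{e}^{i(r_t,F_N)}$, explicit identification of the $k=1,2$ terms and of $\dot f_r(t)$, and a norm bound on the $k\ge 3$ tail that exploits the commutativity of operators at distinct sites to gain the factor $N_T^{-(k-2)/2}$. The only cosmetic difference is that the paper records the tail estimate in the aggregated form $\mathrm{e}^{2\|q_{r_t}\|}\|\dot q_{r_t}\|/\sqrt{N_T}$ and notes uniformity of $\|q_{r_t}\|$, $\|\dot q_{r_t}\|$ on finite time intervals, which you may wish to state explicitly since it is used later in Theorem \ref{theorem}.
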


\begin{proof}
The time-derivative
$$
\frac{{\rm d}}{{\rm d}t}\Psi_t\left[W_N(r)\right]=\frac{{\rm d}f_r(t)}{{\rm d}t}\,\Psi_t\left[W_N(r)\right]\,+\,{\rm e}^{f_r(t)}\,\frac{{\rm d}}{{\rm d}t}W_N(r_t)$$
consist of two terms: from \eqref{intmap2}, by direct computation, the first one contains
\begin{equation}
\label{timederf}
\dot{f}_r(t)=\frac{1}{2}\,\Big(r_t\,,\,\Big((\mathcal{H}+\mathcal{D})\Sigma_\omega
\,+\,\Sigma_\omega(\mathcal{H}+\mathcal{D})^{tr}\Big)\,r_t\Big)=
\Big(r_t\,,\,(\mathcal{H}+\mathcal{D})\Sigma_\omega\,r_t\Big)\ ,
\end{equation}
where in the last equality use has been made of the reality of the vector $r_t=(r_t^1,\ldots,r^d_t)^{tr}$ and of the fact that the covariance matrix is real symmetric, namely $\Sigma_\omega=\Sigma_\omega^{tr}$.
Using Lemma \ref{applemma} and \eqref{qfa2b}, the second term contains
\begin{eqnarray*}
\frac{{\rm d}}{{\rm d}t}W_N(r_t)&=&\frac{{\rm d}}{{\rm d}t}{\rm e}^{i\,F_N(q_{r_t})}\,=\,\Big(i\,F_N\left(\dot{q}_{r_t}\right)-\frac{1}{2}\Big[F_N\left(q_{r_t}\right)\,,\,F_N\left(\dot{q}_{r_t}\right)\Big]\Big]\Big)\,W_N(r_t)\\ 
&+&\sum_{n=3}^{+\infty}\frac{i^n}{n!}\,\mathbb{K}^{n-1}_{F_N(q_{r_t})}\left[F_N(\dot{q}_{r_t})\right]\,W_N(r_t)\ ,
\end{eqnarray*}
where $\dot{q}_{r_t}=(\dot{r}_t,F_N)=\Big(r_t,(\mathcal{H}+\mathcal{D})F_N\Big)$. Then, since operators at different sites commute, one estimates
\begin{eqnarray*}
\left\|\sum_{n=3}^{+\infty}\frac{i^n}{n!}\,\mathbb{K}^{n-1}_{F_N(q_{r_t})}\left[F_N(\dot{q}_{r_t})\right]\,W_N(r_t)\right\|&\leq&
\left\|\sum_{n=3}^{+\infty}\frac{1}{n!}\,\sum_{k=-N}^N\frac{1}{N_T^{n/2}}\,
\mathbb{K}^{n-1}_{q_{r_t}}\left[\dot{q}^{(k)}_{r_t}\right]\right\|\\
&\leq&\frac{1}{\sqrt{N_T}}\,\sum_{n=3}^{+\infty}\frac{(2\|q_{r_t}\|)^{n-1}}{n!}\,\|\dot{q}_{r_t}\|\leq \frac{{\rm e}^{2\|q_{r_t}\|}}{\sqrt{N_T}}\,\|\dot{q}_{r_t}\|\ . 
\end{eqnarray*}
The result thus follows as $q_{r_t}$ and $\dot{q}_{r_t}$ are bounded single-site operators for all $t\geq 0$ belonging to finite intervals of time.
\end{proof}

According to the previous discussion, the convergence of the microscopic dissipative dynamics $\Phi^N_t={\rm e}^{t\,\mathbb{L}_N}$ to the mesoscopic dissipative dynamics $\Phi_t$ in \eqref{qfm0} amounts to the validity of the following result.

\begin{theorem}
\label{theorem}
Given a quantum spin chain with normal quantum fluctuations $(\omega,\chi)$, $\chi$ a finite set of single-site observables and a local Lindblad generator staisfying assumption \eqref{assump0} and preserving the microscopic state $\omega$, then
$$
\lim_{N\to+\infty}\omega\left(W_N(a)\,\Delta_N(t,r)\,\Delta^\dag_N(t,r)\,W^\dag_N(a)\right)=0\ ,
$$
where $\displaystyle\Delta_N(t,r)=\Psi_t\left[W_N(r)\right]\,-\,\Phi^N_t\left[W_N(r)\right]$ and $\Psi_t$ is defined as in \eqref{intmap0}--\eqref{intmap2}. 
\end{theorem}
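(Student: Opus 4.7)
The approach is a Duhamel-type interpolation between $\Phi^N_t$ and $\Psi_t$. Setting $G_s := \Phi^N_{t-s}\circ\Psi_s[W_N(r)]$ for $s\in[0,t]$, one has $G_0 = \Phi^N_t[W_N(r)]$ and $G_t = \Psi_t[W_N(r)]$, whence
$$\Delta_N(t,r) = \int_0^t \Phi^N_{t-s}\Big[(\partial_s - \mathbb{L}_N)\Psi_s[W_N(r)]\Big]\,ds\,.$$
Proposition \ref{prop2} gives $\partial_s\Psi_s[W_N(r)] = A_s\,\Psi_s[W_N(r)]$ up to a norm-vanishing error, with $A_s := i(r_s,(\mathcal{H}+\mathcal{D})F_N)-\tfrac12[(r_s,F_N),(r_s,(\mathcal{H}+\mathcal{D})F_N)]+c(s)$ and $c(s):=(r_s,(\mathcal{H}+\mathcal{D})\Sigma_\omega\,r_s)$. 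Proposition \ref{action} applied to $W_N(r_s)$, combined with the time-invariance relation \eqref{timeinv1} that kills the macroscopic term $\sqrt{N_T}(r_s,(\mathcal{H}+\mathcal{D})x_\omega)$, gives $\mathbb{L}_N[W_N(r_s)] = B_s\,W_N(r_s)$ up to a norm-vanishing error, with $B_s$ equal to $A_s$ but with $c(s)$ replaced by $S(r_s;N)$. Since $\mathbb{L}_N[\Psi_s[W_N(r)]] = e^{f_r(s)}\mathbb{L}_N[W_N(r_s)]$, subtracting exhibits the identical cancellation of the $F_N$-linear and commutator pieces, leaving
$$(\partial_s-\mathbb{L}_N)\Psi_s[W_N(r)] = e^{f_r(s)}\big(c(s)-S(r_s;N)\big)W_N(r_s) + E^N_s\,,\qquad \sup_{s\in[0,t]}\|E^N_s\|\to 0\,.$$

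\textbf{Reduction to a clustering estimate.} Split $\Delta_N = \mathcal{E}_N + \mathcal{M}_N$ accordingly. The norm-contractivity of the unital CP map $\Phi^N_{t-s}$ gives $\|\mathcal{E}_N\|\to 0$, and the Cauchy--Schwartz inequality for $\omega$ together with the unitarity of $W_N(a)$ disposes of the contributions of $\mathcal{E}_N$ and the cross terms to $\omega(W_N(a)\Delta_N\Delta_N^\dagger W_N(a)^\dagger)$. In the GNS representation of $\omega$ with cyclic vector $\xi_\omega$, the main contribution equals $\|\mathcal{M}_N^\dagger W_N(a)^\dagger\xi_\omega\|^2$; Minkowski in the integral plus Cauchy--Schwartz in $ds$ reduce the task to showing, uniformly in $s$,
$$\big\|\Phi^N_{t-s}[X_s]\,W_N(a)^\dagger\xi_\omega\big\|^2\to 0\,,\qquad X_s := W_N(r_s)^\dagger\big(c(s)-S(r_s;N)\big)\,.$$
Schwartz positivity $\Phi^N_{t-s}[X_s^\dagger]\Phi^N_{t-s}[X_s]\le\Phi^N_{t-s}[X_s^\dagger X_s]$ together with $X_s^\dagger X_s = (c(s)-S(r_s;N))^2$ (by unitarity of $W_N(r_s)$ and hermiticity of $c(s)-S(r_s;N)$) then bounds the quantity by
$$\omega\Big(W_N(a)\,\Phi^N_{t-s}\big[(c(s)-S(r_s;N))^2\big]\,W_N(a)^\dagger\Big)\,.$$

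\textbf{Concluding step and main obstacle.} Decomposing
$$\omega\Big(W_N(a)\Phi^N_{t-s}[(c-S)^2]W_N(a)^\dagger\Big) = \omega\big(\Phi^N_{t-s}[(c-S)^2]\big) + \omega\Big([W_N(a),\Phi^N_{t-s}[(c-S)^2]]W_N(a)^\dagger\Big)\,,$$
the first summand equals $\omega((c(s)-S(r_s;N))^2)$ by time-invariance, and vanishes by Lemma \ref{lemma4} applied termwise to the decomposition of $c(s)-S(r_s;N)$ into operators of the form $R_N-\omega(R_N)\mathbf{1}$. The Cauchy--Schwartz inequality for $\omega$ controls the second summand by the $L^2(\omega)$-norm $\|[W_N(a),\Phi^N_{t-s}[(c-S)^2]]\,\xi_\omega\|$, and showing that this vanishes uniformly in $s$ is the expected main obstacle. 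It is resolved by invoking Lemma \ref{lemma5}: because $\mathbb{L}_N$ preserves the multi-site support of tensor products, the norm-convergent series $\Phi^N_{t-s}[(c-S)^2] = \sum_{n\ge 0}\tfrac{(t-s)^n}{n!}\mathbb{L}_N^n[(c-S)^2]$ retains the four-site mean-field form $\frac{1}{N_T^2}\sum_{k,\ell,k',\ell'} J_{k\ell}J_{k'\ell'}\,(\cdots)^{(k)}(\cdots)^{(\ell)}(\cdots)^{(k')}(\cdots)^{(\ell')}$ with norm-bounded, translation-invariant coefficients. Using the identity $[e^X,Y]=\int_0^1 e^{sX}[X,Y]e^{(1-s)X}\,ds$ for the anti-Hermitian $X=i(a,F_N)$, the commutator of $W_N(a)$ with any such four-site localized operator is controlled by $\|[(a,F_N),(\cdots)^{(k)}(\cdots)^{(\ell)}(\cdots)^{(k')}(\cdots)^{(\ell')}]\|=O(N_T^{-1/2})$, and summation against $J_{k\ell}J_{k'\ell'}$ weighted by the $\ell^1$-hypothesis \eqref{assum0} yields $\|[W_N(a),\Phi^N_{t-s}[(c-S)^2]]\|=O(N_T^{-1/2})$ uniformly in $s$, closing the proof.
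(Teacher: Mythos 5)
Your proposal is correct and follows essentially the same route as the paper: the Duhamel interpolation between $\Phi^N_t$ and $\Psi_t$, cancellation of the linear and commutator pieces via Propositions \ref{action} and \ref{prop2} leaving $Z(r_y;N)=\dot f_r(y)-S(r_y;N)$, reduction by Schwartz positivity to $\omega\bigl(W_N(a)\,\Phi^N_{t-y}[Z^2]\,W_N(a)^\dagger\bigr)$, and the conclusion via Lemma \ref{lemma5} (support preservation) together with Lemma \ref{mean2} and \eqref{MFF1}. The only notable difference is cosmetic: where the paper simply invokes Lemma \ref{mean2} to commute $W_N(a)$ past the four-site mean-field operator $\Phi^N_{t-y}[S^2]$, you spell out the $O(N_T^{-1/2})$ commutator estimate explicitly, which is a legitimate (and arguably cleaner) filling-in of that step.
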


\begin{proof}
Notice that
\begin{eqnarray*}
\Psi_t\left[W_N(r)\right]-\Phi_t^N\left[W_N(r)\right]&=&\int_0^tdy\frac{d}{dy} e^{(t-y)\mathbb{L}_N}\left[\Psi_y\left[W_N(r)\right]\right]\\
&=&\int_0^tdy\ {\rm e}^{(t-y)\mathbb{L}_N}\left[\frac{d}{dy}\Psi_y\left[W_N(r)\right]\,-\,\mathbb{L}_N\left[\Psi_y\left[W_N(r)\right]\right]\right]\ .
\end{eqnarray*}
From Propositon \ref{action}, using \eqref{timeinv1} and \eqref{timeinv2}, for large $N$, one approximates 
$$
\mathbb{L}_N\left[W_N(r_y)\right]\simeq \Bigg(i\left(r_y,\left(\mathcal{H}+\mathcal{D}\right)F_N\right)\,-\,\frac{1}{2}\left[(r_y,F_N)\,,\,\Big(r_y,(\mathcal{H}+\mathcal{D})F_N\Big)\right]\,+\,S(r_y;N)\Bigg)\,W_N(r_y)\ .
$$
On the other hand, Proposition \ref{applemma} asserts that the time derivative can be approximated as follows
\begin{eqnarray*}
\frac{d}{dy}\Psi_y\left[W_N(r)\right]&\simeq&\Bigg(i\left(r_y,(\mathcal{H}+\mathcal{D})F_N\right)-\frac{1}{2}\left[(r_y,F_N)\,,\,(r_y,(\mathcal{H}+\mathcal{D})F_N)\right]+\\
&+&\Big(r_y,\left(\mathcal{H}+\mathcal{D}\right)\Sigma_\omega\,r_y\Big)\Bigg)\,\Psi_y\left[W_N(r)\right]\ .
\end{eqnarray*}
Since the errors in these approximations vanish in norm for all finite $t\geq 0$ and $\Phi_t^N$ is a contracting map, $\|\Phi^N_t[a^\dag a]\|\leq\|a\|^2$, what
remains to be controlled is the quantity
\begin{eqnarray*}
&&\omega\left(W_N(a)\Delta_N(t,r)\,\Delta^\dag_N(t,r)\,W^\dag_N(a)\right)=\\
&&\hskip 1cm =\int_0^t{\rm d}y\int_0^t{\rm d}z\ \omega\left(W_N(a)\Phi^N_{t-y}\left[D(r_y;N)\right]\,
\Phi^N_{t-z}\left[D^\dag(r_z;N)\right]\,W^\dag_N(a)\right)\\
&&\hskip 1cm
D(r_y;N):=Z(r_y;N)\Psi_y^N\left[W_N(r)\right]\\ 
&&
\hskip 1cm
Z(r_y;N):=\Big(r_y,\left(\mathcal{H}+\mathcal{D}\right)\Sigma_\omega\,r_y\Big)-S(r_y;N)\ .
\end{eqnarray*}
Using the Cauchy-Schwarz inequality \eqref{CSIN1} and then twice the Schwartz positivity inequality \eqref{Schwpos}, once for $\Phi^N_t$ and the other one for $\Psi_t$, the proof of the theorem reduces to controlling
\begin{eqnarray*}
\omega\left(W_N(a)\Phi^N_{t-y}\left[D(r_y;N)\,D^\dag(r_y;N)\right]\,W_N(b)\right)&\leq&
{\rm e}^{2f_r(y)}\,\omega\left(W^\dag_N(b)\,\Phi^N_{t-y}\left[Z^2(r_y;N)\right]\,W_N(b)\right)\\
&\leq&
\omega\left(W^\dag_N(b)\,\Phi^N_{t-y}\left[Z^2(r_y;N)\right]\,W_N(b)\right)\ .
\end{eqnarray*}
Indeed, $Z_N(r_y;N)$ is hermitian and ${\rm e}^{2f_r(y)}\leq 1$ (see Lemma \ref{lemma6}).
Consider now
$$
Z^2(r_y;N)=\Big(S(r_y;N)\,-\,\dot{f}_r(y)\Big)^2\ ,
$$
where use has been made of \eqref{timederf}.
The operator $S(r_y;N)$ is of the form \eqref{nonlocal}; then, Lemma \ref{lemma5} implies that its support is not altered by the local dissipative dynamics $\Phi^N_t$ so that 
$$
\Phi^N_{t-y}\left[S(r_y;N)\right]=\frac{1}{N_T}\sum_{k,\ell=-N}^{N}\,J_{k\ell}\,\sum_{\alpha,\beta=1}^{p^2}\,c^{\alpha\beta}_{k\ell}(t-y)\,o_\alpha^{(k)} o_\beta^{(\ell)}\ ,
$$
where $c^{\alpha\beta}_{k\ell}(t)$ are suitable coefficients, bounded for all finite $t\geq 0$, and $\{o_\alpha\}_{\alpha=1}^{p^2}$ is a single-site matrix basis in the algebra $M_p(\mathbb{C})$.
Analogously, 
$$
\Phi^N_{t-y}\left[S^2(r_y;N)\right]=\frac{1}{N^2_T}\sum_{k,\ell;p,q=-N}^{N}\,\sum_{\alpha,\beta;\mu,\nu=1}^{p^2}\,J_{k\ell}\,J_{pq}\,d^{\alpha\beta,\mu\nu}_{k\ell,pq}(t-y)\, o_\alpha^{(k)} o_\beta^{(\ell)}\, o_{\mu}^{(p)} o_{\nu}^{(q)}\ ,
$$
with $d^{\alpha\beta,\mu\nu}_{k\ell,pq}(t)$ bounded coefficients for all finite positive times $t\geq 0$.

Lemma \ref{mean2} asserts
that the sums over $k,\ell$ and $p,q$ commute with the Weyl-like operators $W_N(r)$ when $N\to+\infty$; then, using the time-invariance under $\Phi^N_t$ of the microscopic state $\omega$, we get:
\begin{eqnarray*}
&&\hskip-1cm
\lim_{N\to+\infty}\omega\left(W_N(a)\Phi^N_t\left[Z^2(r_y;N)\right]\,W^\dag_N(a)\right)=\lim_{N\to+\infty}\omega\left(\Phi^N_t\left[Z^2(r_y;N)\right]\,W_N(a)W_N^\dag(a)\right)\\
&&\hskip 1cm
=\lim_{N\to+\infty}\omega\left(\Phi^N_t\left[Z^2(r_y;N)\right]\right)=
\lim_{N\to+\infty}\omega\left(Z^2(r_y;N)\right)\ .
\end{eqnarray*}
The proof is thus completed by means of Lemma \ref{mean2}, of \eqref{MFF1} and of \eqref{timederf} which imply
$$
\lim_{N\to+\infty}\omega\left(S^2(r;N)\right)=\Big(\lim_{N\to+\infty}\omega\left(S(r;N)\right)\Big)^2=\dot{f}^2_r(t)\ .
$$
\end{proof}

\subsection{Application}

We illustrate the previous results by means of the following model consisting of 
a doubly infinite chain of spin $1$ systems. Each lattice site supports the algebra of complex $3\times 3$ matrices generated by the angular momentum operators
$J_{1,2,3}$ such that 
$$
[J_\mu,J_\nu]=i\,\varepsilon_{\mu\nu\delta}\,J_\delta\ ,\quad \sum_{\mu=1}^3J_\mu^2=2\ ,
$$
with $\varepsilon_{\mu\nu\delta}$ the totally anti-symmetric $3$-tensor.
In order to explicitly compute the quantities which follow, it is convenient to make use of
the orthonormal basis of $\mathbb{C}^3$ consisting of the eigenvectors of $J_3$:
$J_3\vert \mu\rangle=\mu\vert\mu\rangle$,
$\mu=-1,0,1$. Then, by using the raising and lowering operators $J_\pm=J_1\pm i\,J_2$ such that $J_+\vert1\rangle=J_-\vert-1\rangle=0$, $J_\pm\vert0\rangle=\sqrt{2}
\vert\pm1\rangle$, one computes
\begin{equation}
\label{spin1+-}
J_1\vert \pm1\rangle=\frac{1}{\sqrt{2}}\vert 0\rangle\ ,\ J_1\vert0\rangle=\frac{\vert 1\rangle+\vert-1\rangle}{\sqrt{2}}\ ;\ 
J_2\vert \pm1\rangle=\pm\frac{i}{\sqrt{2}}\vert 0\rangle\ ,\ J_2\vert0\rangle=\frac{\vert 1\rangle-\vert-1\rangle}{i\sqrt{2}}\ .
\end{equation}

We equip the quasi-local algebra generated by the local algebras supported by finitely many sites with
the translation invariant, factorized state $\omega_\beta$ such that, as one readily computes using the chosen orthonormal basis,
\begin{equation}
\label{Gibbsspin1}
\omega_1=\omega_2=0\ ,\ 
\omega_3=-2\frac{\sinh(\beta\omega)}{1+2\cosh(\beta\omega)}\quad \hbox{where}\quad
\omega_\mu:=\omega_\beta(J^{(k)}_\mu)=\frac{\Tr\left({\rm e}^{-\beta\omega J_3}J_\mu\right)}{1+2\cosh(\beta\omega)}
\  .
\end{equation}

It corresponds to a doubly infinite tensor product of a same Gibbs state at inverse temperature $\beta$ with Hamiltonian $\omega J_3$.
This state is evidently clustering so that mean-field observables behave as scalar multiples of the identity when $N$ becomes large.

We shall consider a set $\chi$ consisting of $J_{1,2,3}$ and their local fluctuations
\begin{equation}
\label{flucspin1}
F_N(J_{1,2})=\frac{1}{\sqrt{2N+1}}\sum_{k=-N}^NJ^{(k)}_{1,2}\ ,\quad
F_N(J_3)=\frac{1}{\sqrt{2N+1}}\sum_{k=-N}^N\left(J^{(k)}_3-\omega_3\right)\ .
\end{equation}
Since spin operators at different sites commute, one has
$$
\left[F_N(J_\mu)\,,\,F_N(J_\nu)\right]=i\,\varepsilon_{\mu\nu\delta}\,\frac{1}{2N+1}\sum_{k=-N}^NJ^{(k)}_\delta\ .
$$
Since the state $\omega_\beta$ is translation-invariant, 
$\displaystyle
-i\omega_\beta\left(\left[F_N(J_\mu)\,,\,F_N(J_\nu)\right]\right)=
\varepsilon_{\mu\nu\delta}\,\omega_\beta\left(J^{(k)}_\delta\right)$,
and the symplectic matrix in \eqref{qfa0a} is given by 
$\displaystyle
\sigma_\beta=\omega_3\begin{pmatrix}0&1&0\cr
-1&0&0\cr
0&0&0\end{pmatrix}$.

Thus, in the large $N$ limit, the emerging fluctuation operators $F(J_{1,2})$ commute with $F(J_3)$, while 
\begin{equation}
\label{symspin1}
\left[F(J_1)\,,\,F(J_2)\right]=i\,\omega_3\ .
\end{equation}
Therefore, in the following we shall concentrate on the fluctuation operators 
$F(J_{1,2})$ as the third one behaves as a decoupled degree of freedom.

Concerning the covariance matrix $\Sigma_\beta$ whose entries are given 
in \eqref{qfa1aa}, because of \eqref{Gibbsspin1}, one gets
$\omega_\beta\left(\left\{F_N(J_\mu)\,,\,F_N(J_\nu)\right\}\right)=
\omega_\beta\left(\left\{J_\mu\,,\,J_\nu\right\}\right)$
whenever at least one index is different from $3$, otherwise
$\omega_\beta\left(\left\{F_3(J_\mu)\,,\,F_N(J_3)\right\}\right)=2\left(\omega_\beta(J^2_3)-\omega_3^2\right)$.
Then, by means of the relations \eqref{spin1+-}, the covariance matrix of $F(J_{1,2})$ reads
\begin{equation}
\label{covspin1}
\Sigma_\beta=\frac{1+\cosh(\beta\omega)}{1+2\cosh(\beta\omega)}\,\bold{1}\ ,
\end{equation}
where $\bold{1}$ is the $2\times 2$ identity matrix.

The quantum fluctuation algebra relative to the selected set of observables is thus generated by the large $N$ limit of operators $W_N(r)=\exp\left(i(r_1F_N(J_1)+r_2F_N(J_2)\right)$. These tend to Weyl operators $W(r)=\exp\left(i(r_1F(J_1)+r_2F(J_2)\right)$ in such a way that
\begin{equation}
\label{messtatespin1}
\lim_{N\toì\infty}\omega_\beta\left(W_N(r)\right)=\exp\left(-\frac{r_1^2+r_2^2}{2}\frac{1+\cosh(\beta\omega)}{1+2\cosh(\beta\omega)}\right)=\Omega_\beta(W(r))\ ,
\end{equation}
where $\Omega_\beta$ is the emergent mesoscopic state on the fluctuation algebra.

We shall now focus upon the microscopic dynamics of the chain that we assume to be  dissipative and generated by local Lindblad generators of the form:
\begin{equation}
\mathbb{L}_N[X]=i\omega\sum_{k=-N}^N\left[\,J_3^{(k)}\,,\,X\right]+\frac{\lambda}{2}\sum_{k=-N}^N\left[\left[J_3^{(k)},X\right],J_3^{(k)}\right]\ ,\qquad \lambda>0\ ,
\end{equation}
for all $X\in\mathcal{A}_{[-N,N]}$. 
As the Kossakowski matrix is $\lambda$ times the $(2N+1)\times(2N+1)$ identity matrix, 
the generated maps $\Phi^N_t=\exp(t\mathbb{L}_N)$ are completely positive and their dual maps preserve the state $\omega_\beta$: $\omega_\beta\circ\Phi^N_t=\omega_\beta$.
By explicitly computing $\mathbb{L}_N[J_{1,2}^{(k)}]$ one finds that the locality conditions of Section \ref{loccondsec} are satisfied and that the matrix $\mathcal{L}$ in \eqref{assump0} equals
$$
\mathcal{L}=-\frac{\lambda}{2}\,\bold{1}-\omega\begin{pmatrix}0&1\cr-1&0\end{pmatrix}\ .
$$
Therefore, the time evolution of the Weyl operators 
$$
W(r)\mapsto W_t(r)={\rm e}^{-(r,\mathcal{Y}_t\,r)}\,W(r_t)
$$ 
in \eqref{final} is straightforwardly computed with $r_t$ and $Y_t$ in \eqref{intmap1} and \eqref{intmap2} given by
$$
r_t={\rm e}^{-\lambda t/2}\begin{pmatrix}
r_1\cos(\omega t)+r_2\sin(\omega t)\cr
-r_1\sin(\omega t)+r_2\cos(\omega t)
\end{pmatrix}\ ,\quad \mathcal{Y}_t=\frac{1+\cosh(\beta\omega)}{2\left(1+2\cosh(\beta\omega)\right)}\left(1-{\rm e}^{-\lambda t}\right)\,\bold{1}\ .
$$

From the commutation relations \eqref{symspin1} one constructs annihilation and creation operators $a,a^\dag$ such that $[a\,,\,a^\dag]=1$ and
\begin{equation}
\label{aF}
F(J_1)=\eta\,\left(a+a^\dag\right)\ ,\quad
F(J_2)=i\,\eta\,\left(a-a^\dag\right)\ ,\quad \eta=\sqrt{\frac{\sinh(\beta\omega)}{1+2\cosh(\beta\omega)}}\ .
\end{equation}
Then,
\begin{equation}
\label{Fa}
a=\frac{F(J_-)}{2\eta}\ ,\qquad
a^\dag=\frac{F(J_+)}{2\eta}\ ,
\end{equation}
so that the creation operator $a^\dag$ emerges in the large $N$ limit from the 
the fluctuation
$$
F_N(J_+)=\frac{1}{\sqrt{2N+1}}\sum_{k=-N}^NJ_+^{(k)}\ ,
$$
scaled by the temperature dependent factor $1/\eta$.

At a first glance, the previous expression looks similar to the bosonization procedure of 
Holstein and Primakoff \cite{HP}.
There, the raising and lowering operators $S^{(j)}_{\pm}$ acting on a ($2S+1$)-dimensional Hilbert space can be represented by bosonic-like operators
$$
S_+=a^\dag\,\sqrt{2S-a^\dag a}\ ,\quad
S_-=\sqrt{2S-a^\dag a}\,a\ ,\quad S_3=a^\dag a-S\ ,
$$
such that $[a\,,\,a^\dag]=1$,  with vacuum state $\vert {\rm vac}\rangle=\vert S,-S\rangle$ so that
$$
a\vert{\rm vac}\rangle
=\frac{1}{\sqrt{S-S_3}}\,S_-\vert S,-S\rangle=0\ .
$$
In the Holstein-Primakoff approximation, one considers $S\gg 1$ and restricts the action of the various operators to states $\vert\psi\rangle$ such that 
$\langle\psi\vert a^\dag a\vert\psi\rangle\ll S$. Then, the creation and annihilation operators arise, in the large $S$ limit,
from spin operators with a fluctuation-like scaling 
\begin{equation}
\label{HP}
a^\dag\simeq\frac{S_+}{\sqrt{2S}}\ ,\qquad a\simeq\frac{S_-}{\sqrt{2S}}\ .
\end{equation}
By using this approximation, one can then proceed to the diagonalisation, for instance of the Heisenberg Hamiltonian, and to the introduction of the notion of bosonic spin-waves or magnons.

Identifying $S_\pm$ with $\sum_{k=-N}^NJ^{(k)}_\pm$ and taking as vacuum state the
tensor product of $2N+1$ spin states all pointing down along the third axis,
\begin{equation}
\label{HPstate}
\vert{\rm vac}\rangle=\bigotimes_{k=-N}^N\vert -1\rangle_k\ ,
\end{equation} 
a similarity emerges between the Holstein-Primakoff approximation and the quantum spin fluctuations which has indeed been applied to the study of magnons and their properties \cite{Michoel}. However, the two approaches are physically quite different.

In general, the Holstein-Primakoff approximation applied to a spin-chain involves the bosonization of the on-site internal spin degree of freedom through the scaling $1/\sqrt{S}$, at each fixed lattice site $-N\leq j\leq N$.
Instead, the fluctuations considered here give rise to the bosonization
of collective spin obervables through the scaling $1/\sqrt{2N+1}$ with respect to the number of lattice sites $2N+1$, with fixed on-site spin dimension. 

In the example studied above, a comparison with the Holstein-Primakoff approximation can only be performed through operators involving fluctuations like $F_N(J_\pm)$, since the on-site spins have fixed dimension.
However, also at this level of comparison important differences emerge. In fact, in the Holstein-Primakoff approximation these fluctuation are directly identified with Bosonic operator as in \eqref{HP}, without the temperature dependent scaling as in \eqref{Fa}.
Indeed, the Holstein-Primakoff approximation is only valid on a sector of the Hilbert space where the mean value of $a^\dag a$ is much smaller than $S=2N+1$, while the fluctuation theory holds for general clustering states. 

In the specific case examined in this section, the mesoscopic state $\Omega_\beta$ on in \eqref{messtatespin1} corresponds to the thermal state 
$$
R_\beta=\left(1-{\rm e}^{-\beta\omega}\right)\,{\rm e}^{-\beta\omega\,a^\dag\,a}\ .
$$
Indeed, using \eqref{aF} and turning the Weil operator $W(r)$ into the displacement operator
$\exp(z\,a^\dag-z^*\,a)$ with $z=i\eta(r_1+ir_2)$, one retrieves
$$
{\rm Tr}\left(R_\beta\,{\rm e}^{z\,a^\dag-z^*\,a}\right)=
\exp\left(-\frac{|z|^2}{2}\coth\frac{\beta\omega}{2}\right)=\Omega_\beta(W(r))\ .
$$
The higher the temperature, the smaller $\beta$ and the stronger are the differences between the two approaches, while they disappear when 
$\beta\to\infty$ and the microscopic state $\omega_\beta$ tend to a tensor product as in \eqref{HPstate}.

\section{Conclusions}

We have considered an open quantum spin chain equipped with a translation invariant, clustering state and subjected to a dissipative dynamics of Lindblad type that acts locally, that is on the spin algebra supported by the sites from $-N$ to $N$.
We have studied the mesoscopic dynamics emerging at the level of the quantum fluctuations of single site spin observables when $N\to+\infty$.
Unlike mean-field observables, in the large $N$ limit, the local fluctuations tend to non-commuting extended observables that obey the canonical commutation relations and generate a Weyl algebra. Morevoer, in the same limit, the microscopic state on the quantum spin chain tends, via a quantum central limit theorem to a Gaussian mesoscopic state on such  an algebra.
Differently from previous works \cite{BCF}, where the microscopic state was chosen to be not only translation-invariant and clustering, but also of factorized form and satisfying the \textit{KMS} thermal conditions, we have here relaxed the latter two constraints. Furthermore, the Lindblad generator presently considered has only been chosen to respect the structure of the linear span of the set of observables that give rise to the global quantum fluctuations, while in \cite{BCF} a specific generator was chosen.
Under these assumptions, we have proved that the emerging dissipative mesoscopic dynamics
on the Weyl algebra of quantum fluctuations generically consists of a semigroup of unital, completely positive maps that transform mesoscopic Gaussian states into states of the same kind. Finally, in a specific example we have emphasised the differences between the theory of quantum spin fluctuations and the Holstein-Primakoff spin bosonization procedure.

\section{Appendix A}
\label{AppA}

\noindent
\textbf{Proof of Lemma \ref{lemma5}:}\quad
Due to the properties of commutators, the lemma is certainly true
for the Hamiltonian term $\mathbb{H}_N$ of the Lindblad generator.
For the dissipative term $\mathbb{D}_N$ we proceed by induction: what the lemma states holds for single-site operators, then, by assuming it to be true for the tensor product of $n$ single-site operators, we show the same to hold for $n+1$ products.
Using the algebraic relation
$$
b\,\Big(a\,[d\,,\,c]\,+\,[a\,,\,d]\,c\Big)\,+\,\Big(a\,[b\,,\,c]\,+\,[a\,,\,b]\,c\Big)\,d\,-\,a\,[bd\,,\,c]\,-\,[a\,,\,bd]\,c\,=\,
-2\,[a\,,\,b]\,[d\,,\,c]\ ,
$$
one derives that 
$$
\mathbb{D}_N[ab]=\mathbb{D}_N[a]\,b\,+\,a\,\mathbb{D}_N[b]\,+\,2\sum_{\mu,\nu=1}^m\,\frac{D_{\mu\nu}}{2}\sum_{k,\ell=-N}^{N}J_{k\ell}\left[v_\mu^{(k)}\,,\,a\right]\,\left[b\,,\,v_\nu^{(\ell)}\right]\ ,
$$
for all operators $a,b\in\mathcal{A}_{[-N,N]}$. Let $a=o_{\alpha_1}^{(k_1)}o_{\alpha_2}^{(k_2)}\dots o_{\alpha_n}^{(k_n)}$ and $b=o_{\alpha_{n+1}}^{(k_{n+1})}$; then,
\begin{eqnarray*}
&&
\mathbb{D}_N\left[o_{\alpha_1}^{(k_1)}o_{\alpha_2}^{(k_2)}\dots o_{\alpha_n}^{(k_n)}o_{\alpha_{n+1}}^{(k_{n+1})}\right]=o_{\alpha_1}^{(k_1)}o_{\alpha_2}^{(k_2)}\dots o_{\alpha_n}^{(k_n)}\,\mathbb{D}_N\left[
o_{\alpha_{n+1}}^{(k_{n+1})}\right]\,+\\
&&\hskip 3cm
+\,\mathbb{D}_N\left[o_{\alpha_1}^{(k_1)}o_{\alpha_2}^{(k_2)}\dots o_{\alpha_n}^{(k_n)}\right]\,o_{\alpha_{n+1}}^{(k_{n+1})}\,+\\
&&\hskip 3cm
+\,2\sum_{\mu\nu=1}^m\,\frac{D_{\mu\nu}}{2}\sum_{k,\ell=-N}^{N}J_{k\ell}\,\left[v_\mu^{(k)}\,,\,o_{\alpha_1}^{(k_1)}o_{\alpha_2}^{(k_2)}\dots o_{\alpha_n}^{(k_n)}\right]\,\left[o_{\alpha_{n+1}}^{(k_{n+1})}\,,\,v_\nu^{\dagger(\ell)}\right]\ .
\label{n+1}
\end{eqnarray*}
Due to the assumptions and to the induction hypothesis, the first two contributions 
can again be expressed as linear combinations of products of single-site basis operators at sites $k_1,k_2,\dots k_{n+1}$.
As for the last term, note that
\begin{eqnarray*}
&&
\sum_{k,\ell=-N}^{N}\,J_{k\ell}\,\left[v_\mu^{(k)}\,,\,o_{\alpha_1}^{(k_1)}o_{\alpha_2}^{(k_2)}\dots o_{\alpha_n}^{(k_n)}\right]\,\left[o_{\alpha_{n+1}}^{(k_{n+1})}\,,\,v_\nu^{(\dagger\ell)}\right]=\\
&&\hskip 2cm
=\sum_{q=k_1,k_2,\dots k_n}\,J_{q k_{n+1}}\,o_{\alpha_1}^{k_1}o_{\alpha_2}^{k_2}\dots o_{\alpha_{k-1}}^{k_{q-1}}\left[v_\mu^{(q)}\,,\,o_{\alpha_q}^{(q)}\right]\,o_{\alpha_{q+1}}^{k_{q+1}}\dots o_{\alpha_n}^{k_{n}}\,\left[o_{\alpha_{n+1}}^{k_{n+1}}\,,\,v_\nu^{\dagger(k_{n+1})}\right]\ .
\end{eqnarray*}
Therefore, by expanding the various commutators with respect to the single-site matrix basis $\{o_\alpha\}_{\alpha=1}^{p^2}$, it can also be written as a linear combination of products of basis operators at sites $k_1,k_2,\dots k_{n+1}$. 
\hfill
\qed

\section{Appendix B}

\noindent
\textbf{Proof of Lemma \ref{lemma1}:}\quad
Using the notation in \eqref{qfa2b} and the unitarity and factorisation of 
$W_N(r)$, 
\begin{eqnarray}
\label{proof1a}
\left[z^{(k)}\,,\,W_N(r)\right]&=&\left(z^{(k)}\,-\,{\rm e}^{i\frac{q_r^{(k)}}{\sqrt{N_T}}}\,z^{(k)}\,{\rm e}^{-i\frac{q_r^{(k)}}{\sqrt{N_T}}}\right)\,W_N(r)=-U_{q^{(k)}_r}[z^{(k)}]\,W_N(r)
\\
\label{proof1b}
U_{q^{(k)}_r}[z^{(k)}]&=&\sum_{n=1}^{\infty}\frac{i^n}{n!\left(\sqrt{N_T}\right)^n}\,\mathbb{K}_{q^{(k)}_r}^n\left[z^{(k)}\right]\ ,
\end{eqnarray}
with $z$ any single site operator and $\mathbb{K}^n_{q_r}[z]$ the multi-commutator defined by
\begin{equation}
\label{multicomm}
\mathbb{K}^n_{q_r}\left[z\right]=\left[q_r,\mathbb{K}^{n-1}_{q_r}[z]\right]\ ,\quad
\mathbb{K}^0_{q_r}[z]=z\ .
\end{equation}
Notice that $\displaystyle U^\dag_{q^{(k)}_r}[z^{(k)}]=U_{q^{(k)}_r}[(z^\dag)^{(k)}]$.

Consider the commutator with the Hamiltonian in \eqref{tih}:
$$
i\left[H_N\,,\,W_N(r)\right]=-i\sum_{k=-N}^{N}\,U_{q_r^{(k)}}[h^{(k)}]\,W_N(r)\ .
$$
In the series expansion of $U_{q_r^{(k)}}[h^{(k)}]$, the relevant contribution is given by the first two terms
$$
\widetilde{U}_{q^{(k)}_r}[h^{(k)}]\,=\,\frac{i}{\sqrt{N_T}}\Big[q^{(k)}_r\,,\,h^{(k)}\Big]
\,-\,\frac{1}{2N_T}\,\Big[q^{(k)}_r\,,\,\Big[q^{(k)}_r\,,\,h^{(k)}\Big]\ .
$$
Indeed, the remaining infinite series vanishes in norm when $N\to+\infty$ as 
$$
\left\|\sum_{n=3}^\infty \frac{i^n}{n!\left(\sqrt{N_T}\right)^n}\,\mathbb{K}_{q^{(k)}_r}^n\left[h^{(k)}\right]\right\|\leq \|h\|\,\sum_{n=3}^\infty \frac{2^n\|q_r\|^n}{n!\left(\sqrt{N_T}\right)^n}\leq
\frac{1}{N^{3/2}_T}\,{\rm e}^{2\|q_r\|}\|h\|\ ,
$$
so that, in the large $N$ limit, the quantity $\sum_{k=-N}^{N}U_{q^{(k)}_r}[h^{(k)}]$ behaves as 
\begin{equation}
\label{proof1c}
\sum_{k=-N}^{N}\widetilde{U}_{q^{(k)}_r}[h^{(k)}]=\sum_{k=-N}^{N}\Bigg(\frac{i}{\sqrt{N_T}}\Big[q^{(k)}_r\,,\,h^{(k)}\Big]
\,-\,\frac{1}{2N_T}\,\Big[q^{(k)}_r\,,\,\Big[q^{(k)}_r\,,\,h^{(k)}\Big]\Bigg)\ .
\end{equation}
for  
$$
\left\|\sum_{k=-N}^{N}\Big(U_{q^{(k)}_r}[h^{(k)}]\,-\,\widetilde{U}_{q^{(k)}_r}[h^{(k)}]\Big)\right\|\leq \frac{1}{N^{1/2}_T}\,{\rm e}^{2\|q_r\|}\|h\|
$$ 
which vanishes when  $N\to+\infty$. 

Using \eqref{qfa2b} and \eqref{assump1}, the first term contributing to \eqref{proof1c} scales as a fluctuation. Since operators at different sites commute, it can be rewritten 
as
\begin{eqnarray}
\label{proof1d}
\frac{i}{\sqrt{N_T}}\sum_{k=-N}^{N}\Big[q^{(k)}_r\,,\,h^{(k)}\Big]
&=&
-\frac{i}{\sqrt{N_T}}\sum_{k,\ell=-N}^{N}\Big[h^{(k)}\,,\,q^{(\ell)}_r\Big]=
-\,i\Big[H_N\,,\,(r,F_N)\Big]\\
\label{proof1f}
&=&-(r,\mathcal{H}F_N)\,-\,\sqrt{N_T}\,(r,\mathcal{H}x_\omega)\ ,
\end{eqnarray}
where $\mathcal{H}$ is the matrix with entries $\mathcal{H}_{ij}$ and $x_\omega\in\mathbb{R}^d$ has components $\omega(x_i)$.
 
Instead, the second term in \eqref{proof1c} scales as a mean field observable; since operators at different sites commute, it can be rearranged as follows: 
\begin{equation}
\sum_{k,\ell,j=-N}^{N}\frac{1}{2N_T}\left[q_r^{(j)},\left[q_r^{(\ell)},h^{(k)}\right]\right]=\frac{i}{2}\left[\left(r,F_N\right),\left(r,\mathcal{H}F_N\right)\right]\ .
\end{equation}
Unlike in the first term, because of the commutators, the scalar 
term $\sqrt{N_T}\,(r,\mathcal{H}x_\omega)$ in \eqref{proof1f} does not contribute and
the second term can be written in terms of the fluctuation vector $F_N=(F_N(x_1),\ldots,F_N(x_d))^{tr}$, only.
\hfill\qed

\section{Appendix C}

\noindent
\textbf{Proof of Lemma \ref{lemma2}:}\quad
The same strategy as in the proof of Lemma \ref{lemma1} now applied to the dissipative contribution to the Lindblad generator first yields
\begin{eqnarray*}
\mathbb{D}_N\left[W_N(r)\right]&=&\sum_{k,\ell=-N}^{N}J_{k\ell}\sum_{\mu,\nu=1}^m\frac{D_{\mu\nu}}{2}\Big(v_\mu^{(k)}\,U_{q^{(k)}_r}[(v_\nu^\dag)^{(\ell)}]\,-\,U_{q^{(k)}_r}[v_\mu^{(k)}]\,(v_\nu^\dagger)^{(\ell)}\\
&-&U_{q^{(k)}_r}[v_\mu^{(k)}]\,
U_{q^{(\ell)}_r}[(v_\nu^\dag)^{(\ell)}]\Big)\,W_N(r)\ .
\end{eqnarray*}
Then, by considering the expansions of the two terms in the last contribution, one shows that, apart from the first summands in each series, the rest can be estimated in norm by:
$$
\left\|\sum_{k,\ell=-N}^{N}J_{k\ell}\sum_{n+m>2}\frac{i^{n+m}}{n!m!}\frac{\mathbb{K}_{q^{(k)}_r}^{n}[v^{(k)}_\mu]\,\mathbb{K}_{q^{(\ell)}_r}^{m}[(v_\nu^\dagger)^{(\ell)}]}{\sqrt{N^{(m+n)}_T}}\right\|\leq
\frac{1}{N^{3/2}_T}\sum_{k,\ell=-N}^{N}\left|J_{k\ell}\right|e^{4\|q_r\|}\|v_\mu\|\|v_\nu\|\ .
$$
Because of the assumption \eqref{assum0} on the coefficients $J_{k\ell}$, it then follows that
$$
\lim_{N\to+\infty}\left\|\sum_{k,\ell=-N}^{N}J_{k\ell}\sum_{\mu,\nu=1}^m\,\frac{D_{\mu\nu}}{2N_T}\left(i\left[q_r^{(k)},v_{\mu}^{(k)}\right]\,i\left[q_r^{(\ell)},(v_{\nu}^\dagger)^{(\ell)}\right]-U_{q^{(k)}_r}[v_\mu^{(k)}]\,U_{q^{(\ell)}_r}[(v_\nu^\dag)^{(\ell)}]\right)\right\|=0\ .
$$
Using similar arguments as before, one can also show that the other two contributions
essentially amount to the first two terms in the series expansion; indeed,
\begin{eqnarray*}
\lim_{N\to+\infty}\Bigg\|\sum_{k,\ell=-N}^{N}J_{k\ell}\sum_{\mu,\nu=1}^m\,\frac{D_{\mu\nu}}{2}\Bigg(\frac{1}{\sqrt{N_T}}v_{\mu}^{(k)}\,i\left[q_r^{(\ell)}\,,\,(v_{\nu}^\dagger)^{(\ell)}\right]&-&\frac{1}{2N_T}v_{\mu}^{(k)}\,\left[q_r^{(\ell)}\,,\,\left[q_r^{(\ell)}\,,\,(v_{\nu}^\dagger)^{(\ell)}\right]\right]\\
&-&v_\mu^{(k)}\,
U_{q^{(\ell)}_r}[(v_\nu^\dag)^{(\ell)}]\Bigg)\Bigg\|=0\\
\lim_{N\to+\infty}\Bigg\|\sum_{k,\ell=-N}^{N}J_{k\ell}\sum_{\mu,\nu=1}^m\,\frac{D_{\mu\nu}}{2}\Bigg(\frac{1}{\sqrt{N_T}}\,i\left[q_r^{(k)},v_{\mu}^{(k)}\right]\,(v_{\nu}^\dagger)^{(\ell)}&-&\frac{1}{2N_T}\left[q_r^{(k)}\,,\,\left[q_r^{(k)}\,,\,v_{\mu}^{(k)}\right]\right]\,(v_{\nu}^\dagger)^{(\ell)}\\
&-&U_{q^{(k)}_r}[v_\mu^{(k)}]\,(v_{\nu}^\dagger)^{(\ell)}\Bigg)\Bigg\|=0\ .
\end{eqnarray*}
Thus, for large $N$, the action of the dissipative part of $\mathbb{L}_N$ can be approximated by
\begin{eqnarray}
\label{proof1e1}
&&\hskip-1.5cm
\mathbb{D}_N\left[W_N(r)\right]\simeq
\sum_{k,\ell=-N}^{N}J_{k\ell}\sum_{\mu,\nu=1}^m\,\frac{D_{\mu\nu}}{2\sqrt{N_T}}\left(i\left[v_\mu^{(k)}\,,\,q_r^{(k)}\right]\,v_\nu^{\dagger(\ell)}\,+\,i\,v_\mu^{(k)}\left[q_r^{(\ell)}\,,\,(v_\nu^{\dagger})^{(\ell)}\right]\right)\,W_N(r)\\
\label{proof1e2}
&&\hskip .5cm+\sum_{k,\ell=-N}^{N}J_{k\ell}\sum_{\mu,\nu=1}^m\,\frac{D_{\mu\nu}}{2N_T}\Bigg(\left[q_r^{(k)},v_\mu^{(k)}\right]\,\left[q_r^{(\ell)},(v_\nu^\dagger)^{(\ell)}\right]
\\
\label{proof1e3}
&&\hskip .5cm
-\frac{1}{2}v_\mu^{(k)}\,\left[q_r^{(\ell)},\left[q_r^{(\ell)}\,,\,(v_{\nu}^\dagger)^{(\ell)}\right]\right]\,+\,\frac{1}{2}\left[q_r^{(k)},\left[q_r^{(k)}\,,\,v_{\mu}^{(k)}\right]\right]\,(v_{\nu}^\dagger)^{(\ell)}\Bigg)\,W_N(r)\ .
\end{eqnarray}
Using \eqref{qfa2b} and \eqref{assump1}, the term \eqref{proof1e1} that scales as $1/\sqrt{N_T}$ can be written as:
\begin{eqnarray}
\label{proof1e4}
&&\hskip-1cm
\frac{i}{2\sqrt{N_T}}\sum_{k,\ell=-N}^{N}J_{k\ell}\sum_{\mu,\nu=1}^m\,D_{\mu\nu}\,\Big(\left[v_\mu^{(k)}\,,\,q_r^{(k)}\right]\,(v_\nu^\dagger)^{(\ell)}\,+\,v_\mu^{(k)}\left[q_r^{(\ell)}\,,\,(v_\nu^\dagger)^{(\ell)}\right]\Big)=\\
\label{proof1e5}
&&\hskip 2cm
=i\,\mathbb{D}[(r,F_N)]=i\,(r,\mathcal{D}F_N)\,+\,i\sqrt{N_T}\,(r,\mathcal{D}x_\omega)\ ,
\end{eqnarray}
where as in \eqref{proof1f} $x_\omega\in\mathbb{R}^d$ is a real vector with components $\omega(x_i)$.

Concerning the term in \eqref{proof1e3}, by using \eqref{qfa2b} and the fact that operators at different sites commute, it can be recast in the form
\begin{eqnarray*}
p^{(k,\ell)}_{{\mu\nu}}(r,N)&:=&-\frac{1}{2N_T}v_\mu^{(k)}\left[q_r^{(\ell)}\,,\,\left[q_r^{(\ell)}\,,\,(v_{\nu}^\dagger)^{(\ell)}\right]\right]\,+\,\frac{1}{2N_T}\left[q_r^{(k)}\,,\,\left[q_r^{(k)}\,,\,v_{\mu}^{(k)}\right]\right]\,(v_{\nu}^\dagger)^{(\ell)}\\
&=&-\frac{1}{2}\,v_\mu^{(k)}\left[(r,F_N)\,,\,\left[(r,F_N)\,,\,(v_\nu^\dagger)^{(\ell)}\right]\right]\,+\,\frac{1}{2}\left[(r,F_N)\,,\,\left[(r,F_N),v_{\mu}^{(k)}\right]\right]\,(v_{\nu}^\dagger)^{(\ell)} \ .
\end{eqnarray*}
Since
$
-a\,\big[b\,,\,\big[b\,,\,c\big]\big]\,+\,\big[b\,,\,\big[b\,,\,a\big]\big]\,c\,=\,\Big[b\,,\Big(\,a\,[b\,,\,c]\,+\,[a\,,\,b]\,c\Big)\Big]$,
one can finally write
$$
\sum_{k,\ell=-N}^{N}J_{k\ell}\sum_{\mu,\nu=1}^m\,\frac{D_{\mu\nu}}{2}p^{(k,\ell)}_{{\mu\nu}}(r;N)=-\frac{1}{2}\left[(r,F_N)\,,\,\mathbb{D}_N\left[\left(r,F_N\right)\right]\right]\ .
$$
Further, since $\mathbb{D}_N\left[\left(r,F_N\right)\right]$ appears inside a commutator, the scalar quantity in \eqref{proof1e5} does not contribute, whence 
$$
\sum_{k,\ell=-N}^{N}J_{k\ell}\sum_{\mu,\nu=1}^m\,\frac{D_{\mu\nu}}{2}p^{(k,\ell)}_{{\mu\nu}}(r;N)=-\frac{1}{2}\left[(r,F_N)\,,\,(r,\mathcal{D}F_N)\right]\ .
$$

Let us now consider the contribution in \eqref{proof1e2} that scales as $1/N_T$. 
A similar argument as before allows us to recast it as 
$$
s_{\mu\nu}^{(k,\ell)}(r;N)=\frac{1}{N_T}\left[q_r^{(k)},v_\mu^{(k)}\right]\left[q_r^{(\ell)}\,,\,(v_\nu^\dagger)^{(\ell)}\right]
=\left[(r,F_N)\,,\,v_\mu^{(k)}\right]\,\left[(r,F_N)\,,\,(v_\nu^\dagger)^{(\ell)}\right]\ .
$$
Using the algebraic relation
$$
b\,\Big(a\,[d\,,\,c]\,+\,[a\,,\,d]\,c\Big)\,+\,\Big(a\,[b\,,\,c]\,+\,[a\,,\,b]\,c\Big)\,d\,-\,a\,[bd\,,\,c]\,-\,[a\,,\,bd]\,c\,=\,
2\,[b\,,\,a]\,[d\,,\,c]\ ,
$$
we get:
\begin{eqnarray*}
S(r;N)&:=&\sum_{k,\ell=-N}^{N}J_{k\ell}\sum_{\mu,\nu=1}^m\,\frac{D_{\mu\nu}}{2}s_{\mu\nu}^{(k,\ell)}(r;N)\\
&=&\frac{1}{2}\Big(\mathbb{D}_N\left[(r,F_N)\right]\,(r,F_N)\,+\,(r,F_N)\mathbb{D}_N\left[(r,F_N)\right]\,-\,\mathbb{D}_N\left[(r,F_N)^2\right]\Big)\ .
\end{eqnarray*}
Moreover, since the Hamiltonian term of the Lindblad generator is such that
$$
\mathbb{H}_N\left[(r,F_N)^2\right]=\mathbb{H}_N\left[(r,F_N)\right]\,(r,F_N)+(r,F_N)\,\mathbb{H}_N\left[(r,F_N)\right]\ ,
$$ 
one can write $S(r;N)$ using the full Lindblad generator:
$$
S(r;N)=\frac{1}{2}\Big(\mathbb{L}_N\left[(r,F_N)\right]\,(r,F_N)\,+\,(r,F_N)\,\mathbb{L}_N\left[(r,F_N)\right]\,-\,\mathbb{L}_N\left[(r,F_N)^2\right]\Big)\ .
$$
\hfill\qed

\section{Appendix D}

\noindent
\textbf{Proof of Lemma \ref{lemma4}:}\quad
From the algebraic relation $\displaystyle 
\left[{\rm e}^{iA}\,,\,B\right]=\int_0^1dy\frac{d}{dy}\left({\rm e}^{iyA}\,B\,{\rm e}^{i(1-y)A}\right)$ it follows that
$\displaystyle
\left\|\left[W_N(r)\,,\,R_N\right]\right\|\le\left\|\left[(r,F_N),R_N\right]\right\|$.
From
\begin{equation}
\left[(r,F_N),R_N\right]=\frac{1}{N_T^{3/2}}\sum_{i=1}^d\,r_i\sum_{k,\ell=-N}^{N}J_{k\ell}\left(\left[x_i^{(k)}\,,\,a^{(k)}\right]\,b^{(\ell)}\,+\,a^{(k)}\,\left[x_i^{(\ell)}\,,\,b^{(\ell)}\right]\right)\ ,
\end{equation}
the upper bound 
$$
\left\|\left[(r,F_N),R_N\right]\right\|\leq 4d\max_{1\leq i\leq d}\left\{|r_i|\|x_i\|\right\}\,\|a\|\|b\|\,\frac{1}{N_T^{3/2}}\sum_{k,\ell=-N}^N|J_{k\ell}|
$$
follows. It vanishes when $N\to+\infty$; indeed, the hypothesis on the coefficients $J_{k\ell}$ yields
\begin{eqnarray*}
\nonumber
\lim_{N\to+\infty}\frac{1}{N_T^{3/2}}\sum_{k,\ell=-N}^N|J_{k\ell}|&=&\lim_{N\to+\infty}\frac{1}{N_T^{3/2}}\sum_{k=-N}^N\sum_{p=k-N}^{k+N}|J(p)|
\leq\lim_{N\to+\infty}\frac{1}{N_T^{3/2}}\sum_{k=-N}^N\sum_{p=-\infty}^{+\infty}|J(p)|\\
&\leq&\lim_{N\to+\infty}
\frac{1}{\sqrt{N_T}}\sum_{p=-\infty}^{+\infty}|J(p)|=0\ .
\end{eqnarray*}
This proves the first result of the lemma. The second one amounts to showing that 
$$
\lim_{N\to+\infty}\omega\left(R_N^\dagger R_N\right)=\left|R\right|^2\ ,
$$ 
where
$$
\omega\left(R_N^\dagger R_N\right)=\frac{1}{N_T^2}\sum_{k,\ell=-N}^{N}\sum_{n,m=-N}^{N}J^*_{k\ell}\,J_{nm}\omega\left((b^\dagger)^{(\ell)}\,(a^\dagger)^{(k)}\,a^{(n)}\,b^{(m)}\right)
\ .
$$
Using the translation invariance of $\omega$ we write
$$
\omega\left((b^\dagger)^{(\ell)}\,(a^\dagger)^{(k)}\,a^{(n)}\,b^{(m)}\right)=\omega\left(\tau^{(\ell-m)}\left(b^\dagger\,(a^\dagger)^{(k-\ell)}\right)\,a^{(n-m)}\,b\right)\ .
$$
Then, by setting $p=n-m$ and $j=k-l$, we estimate
\begin{eqnarray*}
\Big|\omega\left(R_N^\dagger R_N\right)-\omega\left(R_N^\dagger\right)\omega\left(R_N\right)\Big|&\leq&\frac{1}{N_T^2}\sum_{\ell,m=-N}^{N}\sum_{p=-N-m}^{N-m}\sum_{j=-N-\ell}^{N-\ell}\left|J(j)\right|\,\left|J(p)\right|\times\\
&\times&\Bigg|\omega\left(\tau^{(\ell-m)}\left(b^{\dagger}\,(a^\dagger)^{(j)}\right)\,a^{(p)}\,b\right)-\omega\left(b^\dagger\, (a^\dagger)^{(j)}\right)\,\omega\left(a^{(p)}\,b\right)\Bigg|\\
&\leq&\frac{1}{N_T}\sum_{h=-N_T+1}^{N_T-1}\sum_{p=-\infty}^{\infty}\sum_{j=-\infty}^{\infty}\left|J(j)\right|\,\left|J(p)\right|\times\\
&\times&\Bigg|\omega\left(\tau^{(h)}\left(b^{\dagger}\,(a^\dagger)^{(j)}\right)\,a^{(p)}\,b\right)-\omega\left(b^\dagger\,(a^\dagger)^{(j)}\right)\,\omega\left(a^{(p)}\,b\right)\Bigg|\ .
\end{eqnarray*}
The two infinite sums converge uniformly in the summation index $h$ because 
$$
\Bigg|\omega\left(\tau^{(h)}\left(b^{\dagger}\,(a^\dagger)^{(j)}\right)\,a^{(p)}\,b\right)-\omega\left(b^\dagger\,(a^\dagger)^{(j)}\right)\,\omega\left(a^{(p)}\,b\right)\Bigg|\leq 2\,\|a\|^2\,\|b\|^2\ ,
$$
and because of the assumptions on the coefficients $J_{k\ell}$; therefore \footnote{Notice that, if a sequence $\{z_k\}_k$ is such that $\displaystyle z:=\lim_{k\to+\infty}z_k$ exists then $\displaystyle \lim_{N\to+\infty}\frac{1}{N}\sum_{k=1}^Nz_k=z$.}, 
\begin{eqnarray*}
&&\lim_{N\to+\infty}\left|\omega\left(R_N^\dagger R_N\right)-\omega\left(R_N^\dagger\right)\omega\left(R_N\right)\right|\leq\\
&&\hskip 1cm
\leq\sum_{p=-\infty}^{\infty}\sum_{j=-\infty}^{\infty}\left|J(j)\right|\,\left|J(p)\right|
\,\lim_{h\to\infty}\Bigg|\omega\left(\tau^{(h)}\left(b^{\dagger}\,a^{\dagger(j)}\right)\,a^{(p)}\,b\right)-\omega\left(b^\dagger\, a^{\dagger(j)}\right)\,
\omega\left(a^{(p)}\,b\right)\Bigg|\\
&&\hskip 1cm
+\,\sum_{p=-\infty}^{\infty}\sum_{j=-\infty}^{\infty}\left|J(j)\right|\,\left|J(p)\right|\,\lim_{h\to-\infty}\Bigg|\omega\left(\tau^{(h)}\left(b^{\dagger}\,a^{\dagger(j)}\right)\,a^{(p)}\,b\right)-\omega\left(b^\dagger\,a^{\dagger(j)}\right)\,
\omega\left(a^{(p)}\,b\right)\Bigg|\ .
\end{eqnarray*}
The result follows since the clustering properties of $\omega$ give
$$
\lim_{h\to\pm\infty}\Bigg|\omega\left(\tau^{(h)}\left(b^{\dagger}\,a^{\dagger(j)}\right)\,a^{(p)}\,b\right)-\omega\left(b^\dagger\, a^{\dagger(j)}\right)\,
\omega\left(a^{(p)}\,b\right)\Bigg|=0\ .
$$
\hfill\qed

\section{Appendix E}
\label{AppB}

\noindent
\noindent\textbf{Proof of Lemma \ref{applemma}:}\quad
Given matrices $A$ and $B$, one has
$$
{\rm e}^{iA}\,B\,{\rm e}^{-iA}=\sum_{n=0}^\infty\frac{i^n}{n!}\underbrace{\Big[A\,\Big[A\,,\cdots\Big[}_{n\ times}A\,,\,B\Big]\cdots\Big]\Big]=\sum_{n=0}^\infty\frac{i^n}{n!}\,\mathbb{K}_A[B]\ .
$$
Then, $[N_t\,,\,M_t]=0$ and $N_tN_t^\dag=N_t^\dag N_t=1$ imply $N_tM_tN^\dag_t=M_t$ and
$\displaystyle \dot{N}_t\,N^\dag_t=-N_t\,\dot{N}^\dag_t$. Therefore,
$$
N_t\,\dot{M}_t\,N^\dag_t\,-\,\dot{M}_t\,=-\,\dot{N}_t\,M_t\,N^\dag_t\,-\,N_t\,M_t\,\dot{N}^\dag_t=\Big[M_t\,,\,\dot{N}_t\Big]\,N_t^\dag\ .
$$
Furthermore, since, for $n\geq 1$,
$\displaystyle \mathbb{K}_A^n[B]=\Big[A\,,\,\mathbb{K}^{n-1}_A[B]\Big]$, it follows that
$$
\hskip -.5cm
N_t\,\dot{M}_t\,N^\dag_t-\dot{M}_t=\sum_{n=1}^\infty\frac{i^n}{n!}\mathbb{K}^n_{M_t}[\dot{M}_t]=\Big[M_t\,,\,O_t\Big]=\Big[M_t\,,\,\dot{N}_t\Big]\,N_t^\dag\ ,
$$
where $O_t=\sum_{k=1}^\infty\frac{i^k}{k!}\mathbb{K}_{M_t}^{k-1}[\dot{M}_t]$. 
Then, using again that $[N_t\,,\,M_t]=0$, one obtains
$$
\Big[M_t\,,\,O_t\,N_t\Big]=\Big[M_t\,,\,\dot{N}_t\Big]\ .
$$
In order to show that $\dot{N}_t=O_tN_t$, consider the orthogonal eigenvectors $\vert m_a(t)\rangle$ of $M_t$ with eigenevalues $m_a(t)$. Then, if $m_a(t)\neq m_b(t)$, the previous equality yields 
$$
\langle m_a(t)\vert O_tN_t\vert m_b(t)\rangle=\langle m_a(t)\vert\dot{N}_t\vert m_b(t)\rangle\ .
$$
On the other hand if $\vert m_a(t)\rangle$ and $\vert m_b(t)\rangle$ correspond to a same (real) eigenvalue $m(t)$, then one uses that
$$
0=\frac{{\rm d}}{{\rm d}t}\Big(\langle m_a(t)\vert m_b(t)\rangle\Big)=\langle \dot{m}_a(t)\vert m_b(t)\rangle\,+\,
\langle m_a(t)\vert\dot{m}_b(t)\rangle\ ,
$$
to deduce that also in such a case
\begin{eqnarray*}
\langle m_a(t)\vert O_t\,N_t\vert m_b(t)\rangle&=&i\,\langle m_a(t)\vert \dot{M}_t\vert m_b(t)\rangle\, {\rm e}^{im(t)}\, 
\delta_{ab}
=i\dot{m}(t)\,{\rm e}^{im(t)}\,\delta_{ab}\\ 
&=&\langle m_a(t)\vert\dot{N}_t\vert m_b(t)\rangle\ .
\end{eqnarray*}
\hfill\qed

\end{document}